\newcommand{\Jejcic}{{Jej\v{c}i\v{c}}}
\newcommand{\DT}{\mathrm{DT}}
\newcommand{\R}{\mathbb{R}}
\newcommand{\inside}{\textsc{inside}}
\newcommand{\outside}{\textsc{outside}}
\renewcommand{\path}{\textsc{path}}
\newcommand{\vertex}{\textsc{vertex}}
\newcommand{\triangulation}{\textsc{tri}}
\newcommand{\DP}{\Delta}
\newcommand{\Tree}{\mathcal{T}}
\newcommand{\eps}{\varepsilon}
\newcommand{\ray}{\vec{r}}
\title{Shortest Path Separators in Unit Disk Graphs}
\author{Elfarouk Harb}{Department of Computer Science, University of Illinois at Urbana-Champaign, USA}{eyharb2@illinois.edu}{https://orcid.org/0000-0002-7770-0932}{}
\author{Zhengcheng Huang}{Department of Computer Science, University of Illinois at Urbana-Champaign, USA}{zh3@illinois.edu}{https://orcid.org/0009-0006-0974-8818}{}
\author{Da Wei Zheng}{Department of Computer Science, University of Illinois at Urbana-Champaign, USA}{dwzheng2@illinois.edu}{https://orcid.org/0000-0002-0844-9457}{}
\titlerunning{Shortest Path Separators in Unit Disk Graphs}
\authorrunning{E. Harb, Z. Huang and D.\,W. Zheng}
\keywords{Balanced shortest path separators, unit disk graphs, crossings}
\begin{document}

\maketitle

\begin{abstract}
We introduce a new balanced separator theorem for unit-disk graphs involving two shortest paths combined with the $1$-hop neighbours of those paths and two other vertices.
This answers an open problem of Yan, Xiang and Dragan [CGTA '12] and improves their result that requires removing the $3$-hop neighbourhood of two shortest paths. 
Our proof uses very different ideas, including Delaunay triangulations and a generalization of the celebrated balanced separator theorem of Lipton and Tarjan [J. Appl. Math. '79] to systems of non-intersecting paths.
\end{abstract}

\section{Introduction}
A \emph{geometric intersection graph} is an undirected graph where each vertex corresponds to a geometric object, and edges indicate which pairs of objects intersect each other. One common type of geometric intersection graph is the unit disk graph, which are geometric intersection graphs of disks with diameter $1$. These graphs arise in modeling wireless communication. Such graphs also appear in applications such as VLSI design. These graphs have been extensively studied in the computational geometry community.

There have been many papers studying different optimization problems in unit disk graphs~\cite{ClarkCJ90, YanXD12, KaplanMRS18, ChanS19a, ChangGL24}, as they are one of the simplest types of geometric intersection graph.
As such, results for unit disk graphs are often used as a starting point to generalize to more general geometric intersection graphs with more complicated objects~\cite{HuntMRRRS98, ErlebachJS05, ChanS19, Har-PeledY22, Chan23}.
Over the years, researchers have built up a toolbox of techniques that can be applied to unit disk graphs. Some of the techniques have been inspired by the more geometric aspects of these graphs, such as geometric data structures~\cite{ChanS19, Chan23}, planar spanners \cite{ChanS19a}, and well separated pair decompositions~\cite{GaoZ05}.
Other techniques have been inspired by planar graph tools, such as planar separators~\cite{MillerTTV97, BergKMT23}, shortest path seperators~\cite{YanXD12}, and more recently VC-dimension in planar graphs~\cite{LiP19,LeW21,ChangGL24}. 
Thus, there is much value in deepening our understanding of unit disk graphs, and in particular further building up its toolbox, as we do in this paper.

\subparagraph*{Separator theorems}
In graphs, a \emph{separator} is a small set of vertices whose removal splits the graph into smaller components. Separators are very useful for designing divide-and-conquer algorithms.
Planar graphs are well-known for admitting good separators. The first separator theorem for planar graphs was due to Lipton and Tarjan~\cite{LiptonT1979}, who proved that every planar graph on $n$ vertices admits a separator of size $O(\sqrt{n})$ that can be computed in $O(n)$ time. Since then, many variants of separator theorems have been proven for planar graphs~\cite{LiptonT1979, Miller86, Frederickson87, Goodrich95, Thorup04, KleinMS13, ChangKT22}. Some of these results can be naturally extended to graphs with bounded genus~\cite{djidjev1981separator,GilbertHT84} or to minor-free graphs~\cite{AlonSR90, KawarabayashiR10, Wulff-Nilsen11}.

For unit disk graphs, many different separators exist, such as line separators~\cite{CarmiCKKORRSS20} and clique separators~\cite{BergKMT23}. 
When the unit disks have low \emph{ply}\footnote{the number of disks intersecting any given point}, good separators are also known to exist~\cite{MillerTTV97, SmithW98}.
Separators are also known for more general objects, such as fat objects~\cite{Chan03} or low-density objects~\cite{Har-PeledQ17}.
Intersection graphs induced by arbitrary curves in the plane, also known as string graphs, have been studied in a series of works~\cite{FoxPach08, FoxP14, Matousek14, Lee17}. Remarkably, Lee~\cite{Lee17} proved that every string graph having $m$ edges admits a balanced separator with $O(\sqrt{m})$ vertices. 

\subparagraph*{Shortest path separators} The usefulness of a separator does not necessarily only depend on the number of vertices. An excellent example of good separators with large size are {\em shortest path separators}, {\em i.e.} separators consisting of a constant number of shortest paths. Thorup showed that every planar graph admits a separator consisting of at most two shortest paths~\cite{Thorup04}. Similar results have been proven for minor-free graphs~\cite{AbrahamG06}. Shortest path separators have been used extensively in distance-related problems in planar graphs, such as distance oracles~\cite{Thorup04,Klein02}, planar emulators~\cite{ChangKT22}, and even network design problems~\cite{KawarabayashiS21,FriggstadM23}.

A natural question  to ask is whether shortest path separator theorems can be adapted to unit disk graphs.
Naively, such separators cannot exist, as the clique on $n$ vertices is realizable as a unit disk graph for which no such separator can exist.
However, we can strengthen the separator by also removing vertices in the $k$-neighborhood of the shortest path, \emph{i.e.} vertices that are at a distance of at most $k$ from the shortest path.
Yan, Xiang and Dragan~\cite{YanXD12} proved that every unit disk graph admits a shortest path $3$-neighborhood separator, that is, by removing two shortest paths and all vertices in the $3$-hop neighborhood of any vertex on the shortest paths, the remaining graph is disconnected with every component having size at most $2/3$ of the vertices of the original graph.
They left open the question of whether there exists a shortest path $1$-neighborhood separator.

\subparagraph*{Our results} 
We answer the open question of Yan, Xiang and Dragan~\cite{YanXD12} in the affirmative.
We show that every unit disk graph has a $1$-neighborhood separator. In particular, it suffices to only remove the $1$-neighborhood of two shortest paths plus the $1$-neighborhood of two other vertices. 
While the proof of Yan, Xiang, and Dragan manipulates crossings in the intersection graph, our proof uses very different ideas involving paths in Delaunay triangulations and a generalization of the shortest path separators of Lipton and Tarjan to sets of \emph{weakly non-crossing paths} in a triangulated planar graph that may be of independent interest. This result is optimal since as mentioned earlier, a shortest path 0-neighborhood separator does not exist for a clique, which is realizable as a unit disk graph.

\begin{theorem}
    \label{thm:sp-sep}
    Every unit disk graph admits a shortest path 1-neighborhood separator.
\end{theorem}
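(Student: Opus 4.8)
The plan is to run a Lipton--Tarjan / Thorup style argument not in the unit disk graph $G$ itself (which can be dense), but in the Delaunay triangulation of the underlying point set, and then translate the resulting separating cycle back to $G$ using a single elementary geometric fact about crossing short segments. We may assume $G$ is connected, since otherwise a whole component is already a balanced separator. Let $P$ be the point set realizing $G$, let $\DT=\DT(P)$ be its Delaunay triangulation, and let $\hatG$ be the planar triangulation obtained from $\DT$ by triangulating the outer face. Two facts anchor the reduction: (i) any two points of $P$ at Euclidean distance at most $1$ are adjacent in $G$, so the subgraph $\DT_{\le 1}$ of short Delaunay edges is a subgraph of $G$; and (ii) $\DT_{\le 1}$ is connected, since it contains the Euclidean minimum spanning tree of $P$, whose bottleneck edge has length at most $1$ whenever $G$ is connected.

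\textbf{The local crossing lemma.} The geometric heart of the translation is the observation that if two segments $ab$ and $cd$, each of Euclidean length at most $1$, cross at a point $z$, then some endpoint of $ab$ lies at distance at most $1$ from some endpoint of $cd$: indeed $\min(\|za\|,\|zb\|)\le \tfrac12$ and $\min(\|zc\|,\|zd\|)\le \tfrac12$, and the two nearer endpoints are then within distance $1$, hence adjacent in $G$. Consequently, for any family $\Pi$ of paths of $G$, if a $G$-edge crosses the straight-line drawing of $\Pi$ then one of its endpoints lies in the $1$-neighbourhood $N_G[\Pi]$; so every connected component of $G-N_G[\Pi]$ is confined to a single face of the planar drawing of $\Pi$. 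This is exactly what is needed once we have a separating \emph{cycle} drawn with short edges.

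\textbf{Producing the separating cycle.} To get a cycle whose sides are balanced I would work inside $\hatG$. Fix a root $r$, take a shortest path tree of $G$ from $r$, and replace each of its edges (length at most $1$) by a path in $\DT_{\le 1}$ between the same endpoints; chosen carefully, these replacement paths stay within the $1$-neighbourhood of the endpoints, and together they form a system $\Pi$ of root-to-vertex paths in $\hatG$ that is \emph{weakly non-crossing} --- the paths may share subpaths, but being drawn inside the planar triangulation $\hatG$ they never cross transversally --- and that uses only genuine edges of $G$. I would then apply a generalization of the Lipton--Tarjan separator to such weakly non-crossing path systems: there is a non-tree edge $xy$ of $\hatG$ for which the ``fundamental cycle'' $C$ consisting of the path of $\Pi$ to $x$, the path of $\Pi$ to $y$, and the edge $xy$, splits $\hatG$ into pieces of at most $\tfrac23 n$ vertices each. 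Translating back, the two paths of $C$ project to two shortest paths $\pi_1,\pi_2$ of $G$, and the single edge $xy$ is absorbed into the separator by adding the two vertices $x$ and $y$; if $xy$ is a long Delaunay edge it must first be rerouted through short Delaunay edges, which is where the two extra vertices come from. By the local crossing lemma, any $G$-path from the inside of $C$ to its outside must meet $N_G[\pi_1]\cup N_G[\pi_2]\cup N_G[x]\cup N_G[y]$, so this is the desired shortest path $1$-neighbourhood separator.

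\textbf{Main obstacle.} I expect two steps to carry the real difficulty. The first is making the generalization of Lipton--Tarjan to weakly non-crossing path systems precise and correct: the classical fundamental-cycle/interpolation argument relies on the ``tree'' being an embedded spanning tree, and one must check it survives when the tree is replaced by a non-crossing lamination of paths that may overlap. The second, and I think the crux, is the faithful translation back to \emph{shortest} paths of $G$ with only the $1$-hop neighbourhood rather than an $O(1)$-hop neighbourhood: this forces the Delaunay replacement paths to be chosen so their $1$-neighbourhoods in $G$ are already covered by the $1$-neighbourhoods of genuine shortest paths, and it forces the long-Delaunay-edge case on the cycle to be handled while inflating the separator by no more than two vertices.
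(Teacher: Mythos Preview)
Your outline is essentially the paper's own strategy: route a shortest-path tree of $G$ through Delaunay paths to obtain a non-crossing path system in the Delaunay triangulation, run a Lipton--Tarjan argument on that system, and translate the fundamental cycle back to $G$ using a short-segment crossing lemma. So the architecture is right, and you have correctly located the two hard steps. But two of your specific claims are missteps, and they are precisely where the paper invests its work.

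First, the assertion that the replacement paths are ``weakly non-crossing'' merely because they live in a planar triangulation is false in the sense you need. Living in a plane graph rules out transversal edge crossings, but two Delaunay paths can still \emph{cross at a shared vertex} (a forward crossing), and once that happens no perturbation separates them, so the Lipton--Tarjan fundamental-cycle argument collapses. The paper proves, via a Voronoi-cell argument, that the straight-segment Delaunay paths $\DP[u]$ to nearest lower-level neighbours never cross one another (same level or different levels), and then shows with a separate ``path extension'' lemma that the concatenations to $s$ can be glued without creating new crossings; careless concatenation does create them. You would need both ingredients, and neither is automatic.

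Second, your treatment of the single non-tree edge $xy$ on the cycle is not right. ``Rerouting through short Delaunay edges'' would insert an entire path, not two vertices, and your local crossing lemma only applies when both segments are short. The paper instead proves a \emph{crominating} lemma: for any Delaunay edge $uv$ (short or long) and any $G$-edge crossing it, one endpoint of the $G$-edge is already adjacent in $G$ to $u$ or to $v$; the long case uses the empty-circle property. This is what lets the two extra vertices suffice. Relatedly, for the path edges your local lemma yields only that a crossing $G$-edge has an endpoint adjacent to a \emph{Delaunay-path} vertex, hence a priori only in the $2$-neighbourhood of the true shortest path; to get $1$-hop the paper uses the stronger geometric fact that every Delaunay-path edge lies inside the diameter disk of its two shortest-path endpoints, so a crossing $G$-edge must meet the unit disk of one of those endpoints directly.
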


\subparagraph*{A first attempt}
\begin{figure}
    \centering
    \includegraphics[page=1,height=0.22\textheight]{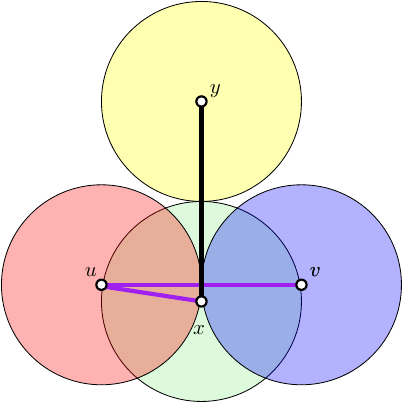}
    \includegraphics[page=2,height=0.2\textheight]{figures/counter_example.pdf}
    \caption{
    \textbf{(Left)} The points $u, v, x, y\in S$ are drawn with circles of radius $1/2$.
    The unique shortest path tree in $G$ with starting vertex $u$ has a crossing edge.
    \textbf{(Right)} Two reflected copies results in a graph where no plane shortest path tree exists, regardless of starting vertex.
    }
    \label{fig:counterexample}
\end{figure}

To illustrate the difficulty and to fully appreciate our contribution, let us consider one approach to construct shortest path 1-neighborhood separators for a unit disk graph $G = (V, E)$.
To do so, we will make two overly wishful assumptions (that would be great if they always held). 

Let $\Tree$ be a shortest path tree of $G$ starting at a fixed vertex $s \in V$. 
We will wishfully assume that the induced drawing of $\Tree$ in the plane has no crossings (assumption 1). 
Next, we will assume that we can triangulate $\Tree$ to get a graph $G_\Tree$ (assumption 2) such that every edge of the triangulated graph $G_\Tree$ is an edge in $G$. 
Now, we can use the shortest path separator theorem of Lipton and Tarjan~\cite{LiptonT1979}, on $G_\Tree$ with spanning tree $\Tree$ to get a Jordan curve $C$ that is a separator for $G_\Tree$.
We can prove (and do so in \Cref{lem:crossinginterior}) that all edges $uv\in E$ have the property that for all other edges crossing the line segment between $u$ and $v$ the crossing edge has at least one end point adjacent to either $u$ or $v$ 
(we call this property \emph{cross-dominating} or \emph{crominating}%
\footnote{This is a perfectly cromulent portmanteau to use.}
for short).
Thus the cycle $C$ is in fact also a shortest path $1$-neighborhood separator of $G$.

\begin{enumerate}
\item 
Our first assumption was that we could find a shortest path tree $\Tree$ of $G$ whose natural embedding has no crossings.
This is not always the case, there are examples (see \Cref{fig:counterexample}) of unit disk graphs $G$ where no such shortest path tree exists.
Instead, we will construct a non-crossing path system $\Pi$ consisting of \emph{pseudo-shortest paths}, i.e., for every vertex $u\in V$ we will find a path $\Pi[u]$ to $s$ such that $\Pi[u]$ consists only of vertices on the shortest path between $u$ and $s$ and $1$-neighbors of the shortest path. 
We show an extension of the planar separator algorithm to find a balanced separator in \emph{path systems} of planar graphs.

\item 
Our second assumption was that we could triangulate the tree $\Tree$ to get a graph $G_\Tree$ such that every edge of the triangulation is an edge in $G$. 
We instead prove that all edges of the Delaunay triangulation have the crominating property, and furthermore, we show that we can construct $\Pi$ using only edges of the Delaunay triangulation of the centers of the disk.%
\footnote{
In fact, it is possible to show that there exists a triangulation (specifically the \emph{edge constrained Delaunay triangulation} of Chew~\cite{Chew89}) of any plane tree such that all edges are crominating. However, we will not discuss this further since the edge constrained Delaunay triangulation of a spanning subset of edges of a Delaunay triangulation is the original Delaunay triangulation.
}
\end{enumerate}


\section{Preliminaries}

\subparagraph*{Crossings.\footnote{Our definitions of {\em crossing} are special cases of the definition from~\cite{EricksonN11}.}} For two line segments $uv$ and $pq$, we say that $uv$ \emph{crosses} $pq$ if there is a point on both line segments that is not an endpoint of either $uv$ or $pq$. 
Given two simple polygonal chains $P = [p_0, p_1, \dots, p_\ell]$ and $Q = [q_0, q_1, \dots, q_\ell]$, we say that $P$ and $Q$ have a \emph{forward crossing} if the following hold:

\begin{figure}
    \centering
    \includegraphics{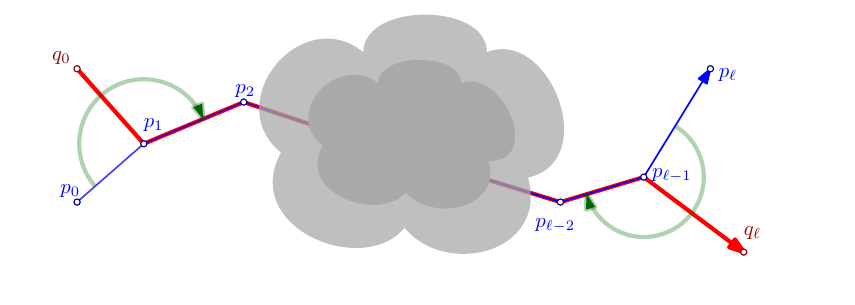}
    \caption{An example of a forward crossing between the red and blue chain. The cloud obscures the shared middle part of the polygonal chains}
    \label{fig:forward_crossing}
\end{figure}

\begin{enumerate}
    \item $p_i = q_i$ for all $1\le i \le \ell-1$
    \item The cyclic order of $p_0, q_0, p_2$ around $p_1$ is the same as the cyclic ordering of $p_\ell, q_\ell, p_{\ell-2}$ around $p_{\ell-1}$.
\end{enumerate}

See \Cref{fig:forward_crossing} for an example of a forward crossing.
The polygonal chains have a \emph{backwards crossing} if $P$ and the reversal of $Q$ have a forward crossing. 
Finally, we say that $P$ and $Q$ {\em cross} if subpaths of $P$ and $Q$ have a forward crossing, a backwards crossing, or two edges $p_ip_{i+1}$ and $q_jq_{j+1}$ that cross.
Equivalently $P$ and $Q$ do not cross if and only if there exists a small local perturbation of the vertices of each polygonal chain such that the two chains share no point in common (this is true as we only consider paths without spurs or forks~\cite[Section 3.1]{ChangEX15}).
If a collection $\Pi$ of polygonal chains are non-crossing (i.e. no pair of paths $P, Q\in \Pi$ cross), then we can also perturb the vertices of all the chains so that no two chains intersect (even at end points).
We note that all chains that we will discuss are fairly nice, and amenable to these two characterizations of crossing.
We encourage the interested reader to see the discussion in Chang, Erickson, and Xu~\cite{ChangEX15} for a thoroughly comprehensive discussion of crossing and non-crossing polygonal chains.

Let $G$ be a given graph with a straight-line embedding in $\R^2$. Then each edge of $G$ embeds onto a line segment and each path in $G$ embeds onto a polygonal chain. We say that two edges of $G$ cross if their corresponding line segments cross. Similarly, two paths in $G$ cross if their corresponding polygonal chains cross.

\subparagraph*{Shortest path separators.} Lipton and Tarjan~\cite{LiptonT1979} showed that a connected triangulated planar graph on $n$ vertices with arbitrary weights on the vertices has a \emph{balanced shortest path separator} which is a Jordan curve consisting of two shortest paths and one edge such that the interior and exterior of the curve each have at most $2/3$ of the total weight of the vertices of the graph.

\begin{theorem}[Balanced shortest path separator of Lipton-Tarjan~\cite{LiptonT1979}]
\label{thm:LTseparator}
Let $G = (V, E)$ be a maximally triangulated planar graph with $n$ vertices and let $T$ be a spanning shortest path tree of $G$ rooted at $s$. Suppose that each vertex of the graph has a weight, and let $W$ be the total weight of all the vertices. Then there exists an edge $uv \in E$ such that the Jordan curve $C$ defined by the edge $uv$ along with the path between $u$ and $v$ in $T$ separates the graph into an interior $A$ and exterior $B$ that each have weight at most $2W/3$.
\end{theorem}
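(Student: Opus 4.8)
The plan is to prove \Cref{thm:LTseparator} by the classical fundamental-cycle argument of Lipton and Tarjan, organized around a single monovariant. The first step is the observation that gives the cycle its shape. Fix a non-tree edge $e = uv$ and let $r$ be the lowest common ancestor of $u$ and $v$ in $T$; then the $u$--$v$ path in $T$ is the concatenation of the $u$-to-$r$ subpath of the shortest $u$-to-$s$ path with the $r$-to-$v$ subpath of the shortest $v$-to-$s$ path. Since a subpath of a shortest path is a shortest path, the fundamental cycle $C_e$ (the edge $e$ together with this $u$--$v$ tree path) is a Jordan curve consisting of two shortest paths and one edge — exactly the type of separator claimed. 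So the whole task reduces to finding \emph{one} non-tree edge $e$ for which $C_e$ is $2/3$-balanced.

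The second step is bookkeeping. Because $G$ is embedded in the plane and $C_e$ is a simple cycle, $C_e$ partitions $V$ into the strictly-interior set $A_e$, the strictly-exterior set $B_e$, and the boundary $V(C_e)$, and planarity forbids any edge of $G$ between $A_e$ and $B_e$; hence $C_e$ is already a valid separator as soon as $w(A_e) \le 2W/3$ and $w(B_e) \le 2W/3$. Two elementary consequences of $w(A_e) + w(B_e) + w(V(C_e)) = W$ will be used repeatedly: if $w(A_e) \in [W/3, 2W/3]$ then $C_e$ is automatically balanced, and the same holds whenever $w(V(C_e)) \ge W/3$; so balance can fail only when one open side alone exceeds $2W/3$.

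The third step is the refinement loop. Start with any non-tree edge (one exists since $G$ has more than $n-1$ edges) and, if its cycle is balanced, return it; otherwise relabel the two sides so that the interior is the heavy one, $w(A_e) > 2W/3$. Let $\Delta$ be the triangular face lying immediately inside the edge $e = uv$, with third vertex $t$; since $T$ cannot contain both $ut$ and $vt$, at least one of them is a non-tree edge. Using this — and the fact that if $t$ lies on $C_e$ it lies on the tree-path portion, while if $t$ is strictly interior one can splice the two-edge path $u,t,v$ into the boundary — I obtain from $C_e$ a sub-cycle $C'$ with $A_{C'} \subseteq A_e$ and $V(C') \subseteq V(C_e) \cup \{t\}$. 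There are only three possibilities: $w(A_{C'}) > 2W/3$, in which case $C'$ is again a fundamental cycle enclosing strictly fewer faces and I recurse; $w(A_{C'}) \in [W/3, 2W/3]$, in which case $C'$ is balanced; or $w(A_{C'}) < W/3$, which can happen only if the single new boundary vertex $t$ satisfies $w(t) > W/3$, so that $w(V(C')) \ge w(t) > W/3$ and $C'$ is balanced by the remark above. Since the number of enclosed faces is a nonnegative integer that drops at every recursion, the loop halts, necessarily with a balanced fundamental cycle, proving the theorem.

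The step I expect to take the most care is the triangle-peeling case analysis in the loop — in particular the subcase in which the apex $t$ lies strictly inside $C_e$ and \emph{both} $ut$ and $vt$ are non-tree edges: splicing $u,t,v$ into the boundary then produces a cycle carrying two non-tree edges rather than a fundamental cycle, and one must still extract a genuine fundamental sub-cycle that either keeps the heavy side above $2W/3$ while losing a face, or is balanced outright. Pinning this down requires a short argument about how the tree path from $t$ toward the root re-enters $C_e$. The remaining ingredients — the lowest-common-ancestor remark, the planarity partition, and the weight inequalities above — are routine.
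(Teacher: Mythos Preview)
The paper does not prove \Cref{thm:LTseparator}: it is stated in the preliminaries as a known result, attributed to Lipton and Tarjan~\cite{LiptonT1979}, and invoked as a black box in the proof of \Cref{lem:separator}. There is therefore no in-paper proof to compare your proposal against.

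That said, your proposal is essentially the classical fundamental-cycle/triangle-peeling argument from the original Lipton--Tarjan paper, so you are reconstructing exactly the proof the citation points to. One small correction: the justification ``since $T$ cannot contain both $ut$ and $vt$'' is false as stated --- a spanning tree may certainly contain both (for instance when $t$ is the parent of both $u$ and $v$). The correct reason at that step is that if both $ut$ and $vt$ were tree edges then the tree path from $u$ to $v$ would be exactly $u,t,v$, so $C_e$ would be the single triangle $\Delta$ with empty strict interior, contradicting $w(A_e)>2W/3$. With that fix, and with the extra care you already flag for the subcase where $t$ is strictly interior and both $ut,vt$ are non-tree (where one must follow the tree path from $t$ until it first meets $C_e$ to split the region into two genuine fundamental cycles), the sketch is sound.
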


\subparagraph*{Path systems.} 
Given a graph $G = (V,E)$ and a fixed vertex $s \in V$, we define a {\em path system to $s$} as a function $\Pi$ defined on some vertex set $V'\subseteq V$ that maps each vertex $u\in V'$ to a directed path $\Pi[u]$ from $u$ to $s$ in $G$. For this paper, we will always assume that our paths are \emph{simple}, that is no vertex is visited more than once on the path. 
We will abuse notation and use $\Pi$ to refer to the collection of paths to $s$.
When $G$ is planar, we say that the path system is {\em non-crossing} if for every pair of vertices $u,v\in V(G)$, the paths $\Pi[u]$ and $\Pi[v]$ are non-crossing. 
A path system is a {\em spanning} path system if every vertex in $V$ is on at least one path of $\Pi$.

\subparagraph*{Pseudo-shortest paths.} 
Given a graph $G$, a {\em pseudo-shortest path} $P$ from $u$ to $v$ for two vertices $u,v \in V$ is a path that starts at $u$, ends at $v$, contains all vertices of a shortest path $P'$ in $G$, and that every vertex on $P$ but not on $P'$ is adjacent to some vertex on $P'$. 
Note that pseudo-shortest paths are closed under concatenations, i.e. if we have a pseudo-shortest path from $u$ to $v$, and a pseudo-shortest path from $v$ to $w$, and a shortest path from $u$ to $w$ goes through $v$, then the concatenation of the paths is a pseudo-shortest path from $u$ to $w$.

\subparagraph*{Unit disk graphs} 
Given a set $S$ of points in $\R^2$,  
a unit disk graph $G = G(S) = (V, E)$ is a graph with vertices $V = S$, and edges between every pair of points $u,v\in S$ with distance less than $1$, i.e. $|u-v|\le 1$.
Note that by this definition, a unit disk has diameter $1$.
Observe that unit disk graphs have a natural embedding in the plane, albeit with potentially many crossing edges.
The following simple lemma about paths that cross the interior of a disk has been discovered in previous works (e.g.~\cite{YanXD12}); we include a proof here for the sake of completeness.
\begin{lemma} \label{lem:crossinginterior}
Let $G = (V, E)$ be a unit disk graph, and let $u,v,x \in V$ be three distinct vertices with the edge $uv \in E$ where the straight line edge defining $uv$ intersects the unit disk centered at $x$.
Then $x$ has an edge with at least one of $u$ or $v$.
\end{lemma}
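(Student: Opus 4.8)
The plan is to argue by contradiction on the distances. Suppose that $x$ has no edge with $u$ and no edge with $v$; by the definition of a unit disk graph this means $|x - u| > 1$ and $|x - v| > 1$. On the other hand, we are told that the segment $uv$ intersects the unit disk of diameter $1$ centered at $x$, i.e.\ the disk of radius $1/2$ about $x$. So there is a point $p$ on the segment $uv$ with $|x - p| \le 1/2$. The goal is to show that these three facts are incompatible.

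The key geometric step is to bound $|u - v|$ from below using $p$ and the triangle inequality: since $p$ lies on the segment $uv$, we have $|u - v| = |u - p| + |p - v|$. Now $|u - p| \ge |u - x| - |x - p| > 1 - 1/2 = 1/2$, and symmetrically $|v - p| > 1/2$. Adding these gives $|u - v| > 1$, which contradicts the assumption that $uv \in E$ (so $|u - v| \le 1$). Hence at least one of $|x - u| \le 1$ or $|x - v| \le 1$ must hold, i.e.\ $x$ is adjacent to $u$ or to $v$.

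I expect no real obstacle here; the only point requiring a little care is the convention that the unit disk "centered at $x$" has diameter $1$ (equivalently radius $1/2$), which the preliminaries fix explicitly, so the point $p$ witnessing the intersection satisfies $|x-p|\le 1/2$ rather than $|x-p|\le 1$. One should also note that the boundary cases (the segment meeting the disk only on its boundary circle, or $p$ coinciding with $u$ or $v$) are harmless: if $p = u$ then $|x - u| \le 1/2 \le 1$ already gives the edge $xu$, and otherwise the strict inequalities above still go through to yield $|u-v| > 1$. This completes the argument.
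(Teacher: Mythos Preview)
Your proposal is correct and follows essentially the same idea as the paper's proof: both pick a point $p$ on the segment $uv$ with $|x-p|\le 1/2$ and combine this with $|u-v|\le 1$ via the triangle inequality. The paper phrases it directly (since $|u-p|+|p-v|\le 1$, one of the two summands is at most $1/2$, so $|x-u|\le 1$ or $|x-v|\le 1$), whereas you run the same inequality contrapositively; the content is identical.
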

\begin{proof}
Consider a point $p$ on $uv$ that lies in the unit disk centered at $x$. Since the edge $uv$ has length at most $1$, $p$ is also in the unit disk of either $u$ or $v$, so $x$ has an edge with either $u$ or $v$.
\end{proof}

\section{Path systems of pseudo-shortest paths in unit disk graphs}
\label{sec:pathsystem}

In this section, we prove the existence of a path system $\Pi$ of pseudo-shortest paths consisting of Delaunay triangulation edges for a unit disk graph $G$, and a fixed starting vertex $s$. We let $W_a$ denote the set of vertices at distance $a$ from a root vertex $s$.
We will show how to build the paths $\Pi[u]$ for a vertex $u\in W_a$ by first
a pseudo-shortest path to the nearest neighbor $w\in W_{a-1}$, then extending it to a pseudo-shortest path to $s$ from $w$.

\subsection{Nearest neighbors, Voronoi cells, and Delaunay paths}

Consider an edge $uw\in E$. Either $u$ and $w$ are in adjacent Voronoi cells, or the line segment $uw$ crosses a sequence of Voronoi cells of the points $u = v_1, v_2, \dots, v_\ell = w$ for some $\ell$. As the Voronoi diagram is dual to the Delaunay triangulation, between each pair of Voronoi cells there is a Delaunay edge. This induces a path in the Delaunay triangulation $\DT(G)$ between $u$ and $w$, which we call the {\em Delaunay path between $u$ and $w$}. Such paths were first considered by Dickerson and Drysdale~\cite{DickersonD90}, and also later by Cabello and \Jejcic~\cite{CabelloJ15}. 


\begin{lemma}[Dickerson-Drysdale~\cite{DickersonD90}; Cabello-\Jejcic~\cite{CabelloJ15}] \label{lem:DD}
Let $uw\in E$. Let $P =[u = v_1, v_2, \dots, v_\ell= w]$ be the Delaunay path between $u$ and $w$. Then the following holds:
\begin{enumerate}
    \item $|v_i-v_j|\leq 1$ for all $1 \leq i < j \leq \ell$, {\em i.e.} all pairs of vertices of $P$ are connected to each other in the unit disk graph.
    \item All vertices of $P$ lie inside the disk with $uw$ as diameter.
    \item For all $1\leq i<j<k \leq \ell$, we have $|v_i-v_j|<|v_i-v_k|$.
\end{enumerate}
\end{lemma}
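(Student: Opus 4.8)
The plan is to parametrize the segment $uw$ as $\gamma(t) = (1-t)u + tw$ for $t\in[0,1]$, and to let $0 = t_0 \le t_1 \le \cdots \le t_\ell = 1$ be the values at which $\gamma$ passes from $\mathrm{Vor}(v_m)$ into $\mathrm{Vor}(v_{m+1})$, so that $\gamma([t_{m-1},t_m]) \subseteq \mathrm{Vor}(v_m)$ and each such sub-segment is non-degenerate (the Delaunay path genuinely crosses $\mathrm{Vor}(v_m)$). For (2), I would fix $m$, pick a point $q = \gamma(t)$ of the segment lying in $\mathrm{Vor}(v_m)$, and use that $v_m$ is the site nearest $q$ while $u,w$ are themselves sites, so $|q - v_m| \le \min(|q-u|,|q-w|) =: r$. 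The key point is that the closed disk $\overline{D}(q,r)$ lies inside the closed disk $\overline{D}(c,|uw|/2)$ of diameter $uw$, where $c = (u+w)/2$: for any $p \in \overline{D}(q,r)$ we have $|p-c| \le |p-q| + |q-c| \le r + |q-c|$, and since $|q-u| = t|uw|$, $|q-w| = (1-t)|uw|$ and $|q-c| = |t-1/2|\,|uw|$, the identity $\min(t,1-t) + |t-1/2| = 1/2$ gives $r + |q-c| = |uw|/2$. Hence $v_m$ lies in the disk of diameter $uw$, which is (2); and (1) is then immediate, since for $i<j$ both $v_i,v_j$ lie in that disk, so $|v_i - v_j|$ is at most its diameter $|uw| \le 1$ and $v_iv_j \in E$.

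For the monotonicity (3), the observation I would build on is that for fixed $j < k$ the function $f(t) = |\gamma(t) - v_j|^2 - |\gamma(t) - v_k|^2$ is \emph{affine} in $t$ (the quadratic terms $|\gamma(t)|^2$ cancel). On the sub-segment inside $\mathrm{Vor}(v_j)$ the site $v_j$ is nearest, so $f \le 0$ there; symmetrically $f \ge 0$ on the sub-segment inside $\mathrm{Vor}(v_k)$. Because $j<k$, the first (non-degenerate) sub-segment precedes the second, and an affine function that is $\le 0$ on an earlier non-degenerate interval and $\ge 0$ at a later point must be non-decreasing — in fact strictly increasing, the only alternative being $f\equiv 0$, i.e. the line $uw$ being the perpendicular bisector of $v_jv_k$, a degeneracy already excluded by the definition of the Delaunay path. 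Evaluating at $t = 0$, i.e. at $v_1 = u$ (which lies strictly before the $\mathrm{Vor}(v_j)$-sub-segment since $j\ge 2$), then gives $|v_1 - v_j|^2 - |v_1 - v_k|^2 = f(0) < 0$, which is (3) in the case $i = 1$; interchanging the roles of $u$ and $w$ gives likewise $|v_\ell - v_j| > |v_\ell - v_k|$ for $j < k$.

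The step I expect to be the main obstacle is passing from a basepoint that lies on the segment $uw$ to a general basepoint $v_i$: the affine-function argument only controls distances measured from points of the segment, whereas $v_i$ sits off it, so the trick does not transfer directly. The natural route I would try is to show that every contiguous sub-path $[v_i,\dots,v_k]$ is itself the Delaunay path between its endpoints — equivalently, that the straight segment $v_iv_k$ traverses exactly $\mathrm{Vor}(v_i),\dots,\mathrm{Vor}(v_k)$ in order — since then applying (2) to this sub-path places $v_j$ in the disk of diameter $v_iv_k$ and yields $|v_i - v_j| < |v_i - v_k|$. Establishing (or, if necessary, correctly restricting) this sub-path statement is the delicate part on which I would concentrate; I would also keep in mind that the construction of pseudo-shortest paths in the next section only invokes the monotonicity with $v_1 = u$ (or $v_\ell = w$) as the basepoint, which the previous paragraph already settles.
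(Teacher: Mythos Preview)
The paper does not give a proof of this lemma: it is stated with attribution to Dickerson--Drysdale and Cabello--\Jejcic\ and then used as a black box, so there is no argument in the paper to compare against.

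On the substance of your proposal: your proofs of (2) and (1) are correct and are essentially the standard ones (and match Cabello--\Jejcic). Your affine-function argument for (3) in the endpoint cases $i=1$ and $i=\ell$ is also correct. Your own diagnosis of the obstacle for general $i$ is accurate, and the route you sketch to close it --- showing that each contiguous sub-path $[v_i,\dots,v_k]$ is itself the Delaunay path between its endpoints --- does not work in general: the segment $v_iv_k$ can traverse Voronoi cells of sites outside $\{v_i,\dots,v_k\}$. (For instance, with $u=(0,0)$, $w=(1,0)$, $v_2=(0.5,0.4)$ and an extra site $z=(0.25,0.5)$, one checks that the Delaunay path from $u$ to $w$ is $[u,v_2,w]$, yet the midpoint of the segment $uv_2$ lies in $\mathrm{Vor}(z)$, so $[u,v_2]$ is not the Delaunay path from $u$ to $v_2$.) Fortunately, as you already observe, the paper only ever invokes (3) with $v_1=u$ as the basepoint --- to conclude that each intermediate $v_m$ satisfies $|u-v_m|<|u-w|$ and hence cannot lie in $W_{a-1}$ --- so your endpoint argument supplies everything the paper actually uses.
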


In particular, if $u\in W_a$ and $w$ is the nearest%
\footnote{In this section, \emph{nearest} refers to distances in the Euclidean metric, not in the graph metric.}
neighbor of $u$ in $W_{a-1}$, then we denote by $\DP[u]$ the Delaunay path between $u$ and $w$, and call it the {\em Delaunay path for $u$}. In this case, it follows immediately from Lemma~\ref{lem:DD} that for all $i<\ell$, we have $v_i\in W_a$. Furthermore, $\DP[u]$ is a pseudo-shortest path from $u$ to $w$.




\subsection{Non-crossing property of Delaunay paths}
We now prove that the Delaunay paths do not cross.

\begin{lemma}\label{lem:delaunaypath_samelv}
    Let $u_1,u_2\in W_a$ and $w_1,w_2$ be the corresponding nearest neighbors in $W_{a-1}$.
    Then the Delaunay paths for $u_1$ and $u_2$ do not cross.
\end{lemma}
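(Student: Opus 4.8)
The plan is to reduce the statement to a purely geometric fact about the two \emph{defining segments} of the Delaunay paths, and then to prove that fact from the nearest-neighbour property. Write $w_i=\nu(u_i)$ for the nearest neighbour of $u_i$ in $W_{a-1}$ and put $\sigma_i=\overline{u_i w_i}$, so that by definition $\DP[u_i]$ is the path in $\DT(G)$ induced by the sequence of Voronoi cells that $\sigma_i$ passes through; recall also from the remark following \Cref{lem:DD} that every internal vertex of $\DP[u_i]$ lies in $W_a$ while its last vertex lies in $W_{a-1}$. I would work under a general-position assumption (no three points collinear, no four cocircular), handling degeneracies by perturbation.

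First I would reduce a crossing of the two Delaunay paths to one of a very restricted form. Since $\DP[u_1]$ and $\DP[u_2]$ are paths in the plane straight-line graph $\DT(G)$, no edge of one crosses an edge of the other, so by the definition of crossing for polygonal chains a crossing of the paths must come from a forward or backward crossing of subpaths. In either case there is a common subpath $[p_1,\dots,p_k]$ along which $\DP[u_1]$ and $\DP[u_2]$ coincide, where $\DP[u_1]$ reaches $p_1$ from some vertex $p_0$ and leaves $p_k$ to some vertex $p_{k+1}$, $\DP[u_2]$ reaches $p_1$ from a \emph{different} vertex $q_0$ and leaves $p_k$ to a \emph{different} vertex $q_{k+1}$, and the two forks are interleaved in the rotation around $p_1$ and around $p_k$. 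In particular each of $p_1,\dots,p_k$ has two neighbours on its path, hence is an internal vertex of both $\DP[u_1]$ and $\DP[u_2]$; so it lies in $W_a$ and is distinct from each of $u_1,u_2,w_1,w_2$.

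The step I expect to be the main obstacle is transferring this configuration to the segments. Both $\sigma_1$ and $\sigma_2$ pass, consecutively and in the same relative order, through the Voronoi cells of $p_1,\dots,p_k$; $\sigma_1$ enters the cell of $p_1$ through the Voronoi edge separating the cells of $p_0$ and $p_1$, whereas $\sigma_2$ enters through the edge separating the cells of $q_0$ and $p_1$, and these are distinct since $p_0\neq q_0$; likewise the two exit edges from the cell of $p_k$ are distinct. Let $R$ be the union of the Voronoi cells of $p_1,\dots,p_k$: this is a simply-connected region, and $\sigma_i\cap R$ is a straight chord of $R$ joining a point of $\partial R$ on the entry edge to a point of $\partial R$ on the exit edge. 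The interleaving of the forks says exactly that the four endpoints of $\sigma_1\cap R$ and $\sigma_2\cap R$ alternate along $\partial R$, and two straight chords of a simply-connected region with alternating endpoints must meet; as $u_1,u_2,w_1,w_2$ are sites lying outside $R$, the intersection point is interior to both $\sigma_1$ and $\sigma_2$, so $\sigma_1$ and $\sigma_2$ cross. Checking the alternation precisely, and handling the unbounded Voronoi cells of convex-hull vertices, is where the care is needed.

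Finally I would show that $\sigma_1$ and $\sigma_2$ cannot cross --- the short, and new, part of the argument. If $w_1=w_2$, then $\sigma_1$ and $\sigma_2$ share the endpoint $w_1$, and an interior common point would force three of the four points onto a line, contrary to general position; so $u_1,u_2,w_1,w_2$ are four distinct points and, if $\sigma_1$ and $\sigma_2$ met in their interiors, they would be the two diagonals of the convex quadrilateral on $u_1,u_2,w_1,w_2$. Applying the triangle inequality to the two triangles cut off by the intersection point gives $|u_1w_1|+|u_2w_2|>|u_1w_2|+|u_2w_1|$. But $w_1,w_2\in W_{a-1}$ and $w_i$ is a nearest neighbour of $u_i$ in $W_{a-1}$, so $|u_1w_1|\le|u_1w_2|$ and $|u_2w_2|\le|u_2w_1|$; summing these contradicts the previous inequality. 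Hence $\sigma_1$ and $\sigma_2$ do not cross, and therefore neither do $\DP[u_1]$ and $\DP[u_2]$.
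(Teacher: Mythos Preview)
Your proof is correct and follows essentially the same strategy as the paper: first show that the defining segments $\sigma_1=\overline{u_1w_1}$ and $\sigma_2=\overline{u_2w_2}$ cannot cross (via the nearest-neighbour triangle-inequality argument, which the paper states in one sentence and you spell out), then argue that a forward/backward crossing of the Delaunay paths would force $\sigma_1$ and $\sigma_2$ to cross.

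The only real difference is in how you execute the second step. You take the union $R$ of the Voronoi cells along the shared subpath, observe that each $\sigma_i\cap R$ is a single straight chord of the simply-connected region $R$, and invoke a Jordan-curve argument: alternating endpoints on $\partial R$ force the chords to meet. The paper instead fixes coordinates so that $\sigma_1$ lies on the $x$-axis, looks only at the Voronoi cells of the first and last shared vertices, and shows by analysing the cyclic order of Voronoi edges that $\sigma_2$ lies above the axis at one end and below at the other, hence must cross $\sigma_1$. The paper's argument is more hands-on but self-contained; your topological formulation is cleaner once granted, but the step you yourself flag---``checking the alternation precisely''---is exactly the content of the paper's coordinate computation, namely translating the forward-crossing condition on rotations at $p_1$ and $p_k$ into an alternation of the four boundary points on $\partial R$. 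For $k=1$ this is immediate from the definition; for $k>1$ one has to track how the two local cyclic orders glue along the tube, which is what the paper's sign-tracking does. Your proposal would be complete once that bookkeeping is written out.
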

\begin{proof}
    \begin{figure}
        \centering
        \includegraphics[scale=0.8, page=2]{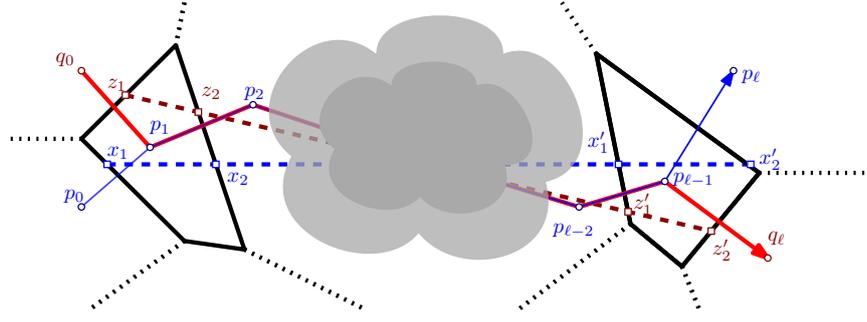}
        \caption{The points $x_1$, $z_1$, $z_2$, $x_2$ are clockwise around the Voronoi cell, leading to $z_1$ and $z_2$ having positive $y$-coordinates. The cloud covers the intermediate vertices on the Delaunay paths.}
        \label{fig:noncrossing_voronoi}
    \end{figure}
    See Figure \ref{fig:noncrossing_voronoi} throughout this proof. Consider the line segments $u_1w_1$ and $u_2w_2$.
    Note that these two line segments don't cross as if they did, then either $w_1$ would be closer to $u_2$ or $w_1$ would be closer to $u_2$.
    By rotating the plane and a suitable shift, 
    we may assume that $u_1w_1$ is horizontal line on the $x$ axis.
    Now let's consider the Delaunay paths $P=\DP[u_1]$ from $u_1$ to $w_1$ and $Q=\DP[u_2]$ from $u_2$ to $w_2$.
    We assume for sake of contradiction that $P$ and $Q$ cross.
    Since $P$ and $Q$ consists of edges in a Delaunay triangulation, there can be no simple crossings, thus there is some subpath $P' = [p_0, p_1, \dots, p_\ell]$ of $P$ and (possibly reversed) subpath $Q' = [q_0, q_1, \dots q_\ell]$ of $Q$ that form a forward crossing.

    Consider the Voronoi cell of $p = p_1 = q_1$. Let $x_1$ and $x_2$ be the intersection of the line segment $u_1w_1$ with the Voronoi cell $p$, and $z_1$ and $z_2$ be the intersection of the line segment $u_2w_2$ with the Voronoi cell of $p$. 
    Without loss of generality,
    we may assume the cyclic ordering $p_0, q_0, p_2$ around $p_1$ is clockwise 
    (or we can reflect everything about the $x$ axis).
    This implies that the Voronoi edges between $p_1$ and $p_0$, $p_1$ and $q_0$, $p_1$ and $p_2$ are also ordered clockwise about the Voronoi cell for $p_1$. 
    As $P'$ and $Q'$ are parts of Delaunay paths, 
    we know that $x_1$ lies on the Voronoi edge between $p_1$ and $p_0$,
    $z_1$ lies on the Voronoi edge between $p_1$ and $q_0$,
    and both $z_2$ and $x_2$ lie on the Voronoi edge between $x_2$ and $p_2$.
    Furthermore, since $x_1x_2$ and $z_1z_2$ don't intersect (as they are subsets of $u_1w_1$ and $u_2w_2$), 
    we can conclude that $x_1$, $z_1$, $z_2$, $x_2$
    is the cyclic ordering in the clockwise direction on the Voronoi cell of $p_1$.
    Since $x_1$ and $x_2$ are on the $x$ axis, we can thus conclude that
    $z_1$ and $z_2$ have positive $y$-value.

    Now let's consider the Voronoi cell of $p' = p_{\ell-1} = q_{\ell-1}$, and define $x_1', x_2', z_1', z_2'$ analogously for this cell as we did before for the Voronoi cell of $p$.
    Since $P'$ and $Q'$ form a forward crossing, this implies $p_\ell, q_\ell, p_{\ell-2}$ are oriented clockwise as well.
    By the same argument as before about cyclic orderings, 
    this implies that $z_1'$ and $z_2'$ must have a negative $y$ value.
    However, $z_2$ had a positive $y$-value and $z_1'$ had a negative $y$-value.
    This implies that the ray $\ray(z_1, z_2)$ from $z_1$ to $z_2$ and the ray $\ray(z_2', z_1')$ from $z_2'$ to $z_1'$ intersect the line between $x_1$ and $x_2'$. 
    By symmetry, we can also show that the ray $\ray(x_1, x_2)$ and the ray $\ray(x_2', x_1')$ intersects the line between $z_1$ and $z_2'$. 
    Thus we conclude that the line segment between $z_1$ and $z_2'$ intersect the line segment between $x_1$ and $x_2'$, a contradiction.
\end{proof}

We can also show that Delaunay paths between different levels do not cross either.
\begin{lemma}\label{lem:delaunaypath_difflv}
    Let $u_1\in W_a$ and $u_2\in W_b$ for some postive integers $a$ and $b$ with $a<b$.
    Then the Delaunay paths $\Delta[u_1]$ and $\Delta[u_2]$ do not cross.
\end{lemma}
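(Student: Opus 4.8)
The plan is to exploit the fact that two Delaunay paths coming from different levels $W_a$ and $W_b$ occupy essentially disjoint ``bands'' of the vertex set and hence can meet in at most one vertex, which moreover must be an endpoint of one of them; this is in sharp contrast with the same-level case of \Cref{lem:delaunaypath_samelv}, which genuinely needed the Voronoi-geometry argument. The first step is to pin down the vertex structure of a Delaunay path. If $u\in W_a$ and $w$ is its Euclidean nearest neighbour in $W_{a-1}$, then for each internal vertex $v_i$ of $\DP[u]$ (i.e. $i<\ell$) items~1 and~3 of \Cref{lem:DD} give $|u-v_i|<|u-w|$ and that $v_i$ is adjacent in the unit disk graph to both $u\in W_a$ and $w\in W_{a-1}$; hence $v_i$ lies in $W_{a-1}\cup W_a$, and if $v_i$ lay in $W_{a-1}$ it would be closer to $u$ than $w$, contradicting the choice of $w$. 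So every internal vertex of $\DP[u]$, as well as $u$ itself, lies in $W_a$, giving $V(\DP[u])\subseteq W_{a-1}\cup W_a$ with $V(\DP[u])\cap W_{a-1}=\{w\}$, where $w$ is the terminal vertex of $\DP[u]$.

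Applying this to $u_1\in W_a$ and $u_2\in W_b$, with $w_1,w_2$ the respective terminal vertices, we get $V(\DP[u_1])\subseteq W_{a-1}\cup W_a$ and $V(\DP[u_2])\subseteq W_{b-1}\cup W_b$. I then split on $b-a$. If $b\ge a+2$, the index sets $\{a-1,a\}$ and $\{b-1,b\}$ are disjoint, so $\DP[u_1]$ and $\DP[u_2]$ share no vertex at all. If $b=a+1$, the only level common to the two bands is $W_a=W_{b-1}$, and the unique vertex of $\DP[u_2]$ in $W_{b-1}$ is its terminal vertex $w_2$; hence $\DP[u_1]$ and $\DP[u_2]$ share at most the single vertex $w_2$, which is an endpoint of $\DP[u_2]$.

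The last step feeds this into the definition of crossing. Two chains cross only if either (i) two of their edges cross as segments, or (ii) a subpath of one and a subpath of the other form a forward or backward crossing. Case (i) is impossible: every edge of $\DP[u_1]$ and of $\DP[u_2]$ is an edge of the Delaunay triangulation $\DT(G)$, and the edges of a single Delaunay triangulation, forming a plane straight-line graph, pairwise do not cross. Case (ii) is also impossible: in a forward (or backward) crossing the two subpaths have common length $\ell\ge 2$ and share the vertices $p_1=q_1,\dots,p_{\ell-1}=q_{\ell-1}$, each of which is an interior vertex of the full chains $\DP[u_1]$ and $\DP[u_2]$ (a subpath is contiguous, so an interior vertex of a subpath of $\DP[u_2]$ is interior to $\DP[u_2]$). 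When $b\ge a+2$ there is no common vertex to serve this role, and when $b=a+1$ the only candidate, $w_2$, is an endpoint of $\DP[u_2]$ and thus not interior to it — a contradiction in both cases. Hence $\DP[u_1]$ and $\DP[u_2]$ do not cross.

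I do not anticipate a genuine obstacle; the argument is almost entirely bookkeeping. The two spots needing care are the first step — verifying that ``every vertex of $\DP[u]$ except the terminal one lies in $W_a$'' is actually forced by the nearest-neighbour choice of $w$ together with \Cref{lem:DD}, rather than merely asserted — and the precise reading of the crossing definition in the last step, so that $\DP[u_2]$ merely touching $\DP[u_1]$ at the shared endpoint $w_2$ is correctly classified as non-crossing.
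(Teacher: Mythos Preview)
Your proof is correct and follows essentially the same approach as the paper: both arguments observe that Delaunay edges cannot cross as segments, then use the level structure from \Cref{lem:DD} to show the two paths share at most one vertex --- necessarily the terminal vertex $w_2$ of $\DP[u_2]$ when $b=a+1$ --- which rules out any forward or backward crossing. Your write-up is more explicit than the paper's (you spell out why internal vertices of $\DP[u]$ must lie in $W_a$ via the nearest-neighbour choice of $w$, and why a vertex interior to a crossing subpath is interior to the full path), but the underlying argument is identical.
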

\begin{proof}
    Let $w_1$ be the nearest neighbor of $u_1$ in $W_{a-1}$, and $w_2$ be the nearest neighbor of $u_2$ in $W_{b-1}$.
    We consider the Delaunay paths $\DP[u_1] = [p_1 = u_1, p_2, \dots, p_k = w_1]$ and $\DP[u_2] = [q_1 = u_2, q_2, \dots q_\ell = w_2]$.
    First observe that since $\DP[u_1]$ and $\DP[u_2]$ are Delaunay paths, they contain subsets of edges in the Delaunay triangulation. They can only have forward or backwards crossings.
    However, by \Cref{lem:DD}, $q_i$ for $i < \ell$ is on $W_b$ and $q_\ell$ is in $W_{b-1}$, 
    while all of the vertices of $\DP[u_1]$ are on level $a$ or level $a-1$.
    Since $a<b$, $\DP[u_1]$ and $\DP[u_2]$ can share at most one vertex in common, namely when $b-1 = a$ and $q_\ell$ is the common vertex.
    Since $q_\ell$ is an endpoint of $\DP[u_1]$, a forward (or backward) crossing between $\DP[u_1]$ and $\DP[u_2]$ is not possible.
\end{proof}

\subsection{Constructing pseudo-shortest paths from Delaunay paths}

\begin{figure}
    \centering
    \includegraphics[height=0.15\textheight]{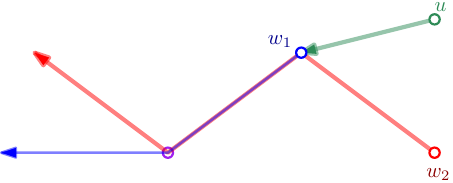}
    \caption{The paths $\Pi[w_1]$ (in blue) and $\Pi[w_2]$ (in red) for $w_1, w_2\in W_{a-1}$ don't cross.
    However, the Delaunay path $\DP[u]$ consisting of $u\in W_a$ to $w_1$ (in green) concatenated with $\Pi[w_1]$ forms a path that crosses $\Pi[w_2]$.}
    
    \label{fig:careless_attachments}
\end{figure}

We have shown via Lemma~\ref{lem:delaunaypath_samelv} and Lemma~\ref{lem:delaunaypath_difflv} that no two Delaunay paths cross.
This naturally suggests an intuitive way to inductively build $\Pi[u]$ based on distance from $s$. 
Suppose we have constructed a non-crossing path system $\Pi_{\le a-1}$ of pseudo-shortest paths of vertices at distance at most $a-1$ from $s$ by finding paths $\Pi[v]$ for all $v\in \bigcup_{i=0}^{a-1} W_{i}$, and we wanted to construct $\Pi[u]$ for $u\in W_a$. 
The natural method is to define $\Pi[u]$ to be the concatenation of the Delaunay path $\DP[u]$ that goes from $u$ to a vertex $w\in W_{a-1}$ with the path $\Pi[w]$. Clearly this is a pseudo-shortest path, as it is the concatenation of a pseudo-shortest path from $u$ to $w$ and from $w$ to $s$.

Unfortunately, carelessly extending paths in this manner may in fact create crossings! Consider two vertices $w_1, w_2 \in W_{a-1}$ that have non-crossing paths $\Pi[w_1]$ and $\Pi[w_2]$ where $w_1\in \Pi[w_2]$. If we attach the Delaunay path $\DP[u]$ for a vertex $u\in W_a$ that has nearest neighbor $w_1$, then this might induce a crossing as pictured in \Cref{fig:careless_attachments}.
Thankfully, with a little more care, we show that there is a way to extend the Delaunay path $\DP[u]$ for all $u\in W_a$ into full paths that don't cross each other or $\Pi_{\le a-1}$ with the following lemma.

\begin{lemma}[Path Extension Lemma] \label{lem:path_ext}
Let $G=(V,E)$ be a planar graph, and $\Pi$ be a non-crossing path system to $s\in V$ that contains a path $\Pi[w]$ from $w$ to $s$ for some $v\in V$. Suppose we had a non-crossing collection of paths $\mathcal{P}$ that end at $w$ where no path in $\mathcal{P}$ crosses any of the paths of $\Pi$. Then we can extend each path of $\mathcal{P}$ to end at $s$ without creating any crossing.
\end{lemma}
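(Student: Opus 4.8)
The plan is to route each path of $\mathcal{P}$ to $s$ \emph{alongside} the portion of $\Pi$ that sits next to it at $w$, instead of naively appending $\Pi[w]$ (which fails precisely because another path of $\Pi$ may pass through $w$ on the wrong side, as in \Cref{fig:careless_attachments}). I would first collect the tails $\mathcal{R} = \{\,\pi^{\ge w} : \pi \in \Pi,\ w \in \pi\,\}$, where $\pi^{\ge w}$ denotes the sub-path of $\pi$ from $w$ to $s$; note that $\Pi[w]\in\mathcal{R}$, that every $R\in\mathcal{R}$ is a $w$--$s$ path in $G$, that the paths of $\mathcal{R}$ are pairwise non-crossing (being sub-paths of the non-crossing system $\Pi$), and that no path of $\mathcal{P}$ crosses any of them. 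Set $H=\bigcup\mathcal{R}$, a connected plane graph containing both $w$ and $s$. Since no path of $\mathcal{P}$ crosses $H$, each $P\in\mathcal{P}$ lies in the closure of a single face of $H$ and arrives at $w$ inside one of the angular sectors cut around $w$ by the edges of $H$.

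Next, for each $P$ I would choose one of the two edges of $H$ bounding that sector — call it the \emph{wall} of $P$ — and extend $P$ by \emph{hugging} $H$ from $w$ to $s$: leave $w$ along the wall, follow the $\mathcal{R}$-path carrying it, and whenever that path turns off the boundary of the current face toward its interior, switch to the $\mathcal{R}$-path that continues along the face boundary, repeating until $s$ is reached. Such a hugging walk exists because $H$ is connected and contains both $w$ and $s$; it consists only of edges of $H\subseteq\Pi$; and I would then shortcut the concatenation of $P$ with this walk to a simple path ending at $s$. The walls must be assigned consistently: within each sector at $w$, order the arriving $\mathcal{P}$-paths by the cyclic position of their last edge, and give one wall to a prefix of this order and the other wall to the complementary suffix.

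The remaining work is a sequence of crossing checks, all of which I would make rigorous through the perturbation characterization of (non-)crossing chains recalled in the preliminaries (perturb all chains to be pairwise disjoint except at prescribed common endpoints, argue topologically, and pull back). \emph{(i)} An extended path does not cross $\Pi$: $P$ does not, the hugging walk does not (its edges lie on paths of $\Pi$), and no forward crossing is created at the junction $w$ precisely because the wall bounds the sector that $P$ enters, so the last edge of $P$ and the first edge of the hugging walk lie on the same side of the two edges that any other path of $\Pi$ contributes at $w$ — here one uses that $\Pi$ is non-crossing. \emph{(ii)} Two extended paths do not cross each other: those routed along the same wall merge into a common tail and hence cannot cross (two simple paths that share a terminal sub-path and do not cross before it do not cross), while for different walls the consistent assignment makes the two paths peel away from $w$ in a non-interleaving cyclic order and then stay on opposite sides of the relevant $\mathcal{R}$-paths. \emph{(iii)} Shortcutting only deletes edges and therefore preserves everything above.

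I expect the genuine difficulty to be in making the hugging construction precise and in check \emph{(ii)}, because of the degeneracies one must absorb: several paths of $\mathcal{R}$ may leave $w$ along the same first edge or share entire sub-paths; a face of $H$ incident to $w$ need not have $s$ on its boundary, so the hugging walk really does switch $\mathcal{R}$-paths at interior corners $x\ne w,s$ and one must check that no crossing is introduced there either; and several paths of $\mathcal{P}$ may nest or share sub-paths. Casting everything in terms of $H=\bigcup\mathcal{R}$ and its face structure, together with the perturbation reduction to the generic disjoint case, is what should keep all of this manageable.
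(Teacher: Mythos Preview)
Your approach and the paper's share the same core idea: extend each $P\in\mathcal{P}$ by the tail of whichever $\Pi$-path sits cyclically next to $P$'s last edge at $w$. The execution differs. The paper perturbs \emph{first}: it separates all paths of $\Pi\cup\mathcal{P}$ into pairwise disjoint curves whose copies of $w$ land in an $\varepsilon$-ball, then for each $P$ simply snaps its endpoint to the clockwise-next $\Pi$-endpoint around that ball and appends that $\Pi$-path's forward continuation to $s$. Non-crossing is immediate ``by construction,'' because after perturbation the $\Pi$-continuations are already pairwise disjoint and the uniform clockwise rule introduces no interleaving on the $\varepsilon$-circle. You instead build the extension in the unperturbed graph via the face structure of $H=\bigcup\mathcal{R}$: a hugging walk that may switch between $\mathcal{R}$-paths, a shortcutting step, and a per-sector prefix/suffix wall assignment, invoking perturbation only afterward for the crossing checks. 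This can be made to work, but the switching mechanism is both hard to state precisely (``turns off the boundary of the current face toward its interior'' is ambiguous, since no edge of $H$ can enter the interior of a face of $H$) and unnecessary: once perturbed, following a single $\Pi$-continuation already reaches $s$ with no switching, no shortcutting, and no prefix/suffix split. The paper's route buys a three-step argument with essentially no case analysis; your route buys nothing extra and carries exactly the degeneracy burden you flag in your final paragraph.
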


\begin{proof}
The extension proceeds as follows and is illustrated in \Cref{fig:snapping_down}: 
\begin{enumerate}
    \item  
    Consider a fixed perturbation to the paths of $\Pi$ and $\mathcal{P}$ such that none of the paths share any point in common, and all end points of $w$ lie in a small ball of radius $\varepsilon$ for an arbitrarily small $\varepsilon> 0$. This is possible because all paths are non-intersecting. See \cite[Section 4]{ChangEX15} for discussions of algorithms for constructing the perturbation.
    \item 
    For every path $P\in \mathcal{P}$, snap the end point of the path to the end point of the path of $\Pi$ that is clockwise around the $\varepsilon$-radius ball. We omit the formal description of this step, as it is more instructive to observe the illustration in \Cref{fig:snapping_down}.
    \item 
    Extend all paths $P\in \mathcal{P}$ by the forward continuation of the path of $\Pi$ we have snapped to.
\end{enumerate}
Observe that by construction, our extended (perturbed) paths are non-crossing. 
\end{proof}

This gives us a way to construct our path system $\Pi_{\le a-1}$ to $\Pi_{\le a}$ with Delaunay paths.
\begin{lemma}
\label{lem:path_ext_multiple}
Given a set of non-crossing pseudo-shortest paths $\Pi_{\le a-1}$,
we can extend each $\DP[u]$ for all $u\in W_a$ to a pseudo-shortest path $\Pi[u]$ in a way such that no two paths intersect each other or paths of $\Pi_{\le a-1}$.
\end{lemma}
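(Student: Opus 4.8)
The plan is to build $\Pi[u]$, for each $u\in W_a$, as the concatenation of its Delaunay path $\DP[u]$ --- which runs from $u$ to its Euclidean nearest neighbour $w\in W_{a-1}$ --- with a pseudo-shortest path from $w$ to $s$ borrowed from the already-constructed system, via the Path Extension Lemma (\Cref{lem:path_ext}). Because the paths $\DP[u]$, $u\in W_a$, do not all end at the same vertex, \Cref{lem:path_ext} cannot be invoked in one shot; instead I would partition $W_a$ by nearest neighbour --- for each $w\in W_{a-1}$ let $\mathcal{P}_w$ be the set of Delaunay paths $\DP[u]$ such that $u\in W_a$ has $w$ as its nearest neighbour in $W_{a-1}$ --- and process the vertices $w$ one at a time. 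Two facts set the stage: by \Cref{lem:delaunaypath_samelv} and \Cref{lem:delaunaypath_difflv} the family $\{\DP[u]:u\in W_a\}$ is pairwise non-crossing and no member crosses any path of $\Pi_{\le a-1}$ (each such path being a concatenation of Delaunay paths of level at most $a-1$); and by \Cref{lem:DD} every vertex of $\DP[u]$ lies in $W_a$ except its last vertex $w$, so a level-$a$ Delaunay path meets the levels $\le a-1$ only at that one vertex.

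The iteration maintains a non-crossing path system $\Pi'$ in $\DT(G)$, initialised to $\Pi_{\le a-1}$; when it reaches $w\in W_{a-1}$ it adds the extensions of all paths in $\mathcal{P}_w$ via \Cref{lem:path_ext}, applied with $\Pi\gets\Pi'$ and $\mathcal{P}\gets\mathcal{P}_w$. The hypotheses are: $\Pi'\ni\Pi[w]$, which holds since $w\in W_{a-1}$; $\mathcal{P}_w$ is non-crossing (first recorded fact); and no path of $\mathcal{P}_w$ crosses any path of $\Pi'$, which is the crux. Here I would first establish the sub-claim that every continuation $R_{u'}$ used by an already-added extension $\Pi[u']=\DP[u']\cdot R_{u'}$ (with $u'$ belonging to an earlier-processed group $\mathcal{P}_{w'}$) is a subpath of a path of $\Pi_{\le a-1}$: indeed the Delaunay portion $\DP[u']$ touches the level-$(a-1)$ vertices only at $w'$, so a later extension passing through a different vertex $w$ of level $a-1$ must do so inside its $R$-part, and unwinding this recursion bottoms out in $\Pi_{\le a-1}$. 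Given the sub-claim, $\DP[u]$ does not cross $\DP[u']$ (\Cref{lem:delaunaypath_samelv}), does not cross $R_{u'}$ (a piece of $\Pi_{\le a-1}$), and does not pass through the junction vertex $w'$ (its only vertex of level $a-1$ is $w\neq w'$) --- together these preclude a crossing with $\Pi[u']$. Thus \Cref{lem:path_ext} applies and keeps $\Pi'$ non-crossing, so after processing all of $W_{a-1}$ we may set $\Pi_{\le a}:=\Pi'$, in which each $u\in W_a$ lies on $\Pi[u]=\DP[u]\cdot R_u$.

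For pseudo-shortestness I would use the invariant that every suffix of every path of $\Pi_{\le a-1}$ is a pseudo-shortest path to $s$, proved by induction on the level: a path of $\Pi_{\le a-1}$ is a Delaunay path followed by a suffix of a lower-level path; a Delaunay path induces a clique in the unit disk graph (\Cref{lem:DD}, item~1), so all of its suffixes are pseudo-shortest paths between their endpoints; and pseudo-shortest paths are closed under concatenation (the required condition holding because the junction vertex is adjacent to, and one level closer to $s$ than, the vertex it is attached to). By the sub-claim, $R_u$ is a suffix of a path of $\Pi_{\le a-1}$ starting at $w\in W_{a-1}$, hence pseudo-shortest from $w$ to $s$; since $w$ is a Euclidean nearest neighbour of $u$ in $W_{a-1}$ it lies on a shortest path from $u$ to $s$, and $\DP[u]$ is pseudo-shortest from $u$ to $w$, so $\Pi[u]=\DP[u]\cdot R_u$ is pseudo-shortest from $u$ to $s$.

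The main obstacle is exactly the phenomenon of \Cref{fig:careless_attachments}: the Delaunay paths cannot be attached to $\Pi_{\le a-1}$ simultaneously, so one must process the groups $\mathcal{P}_w$ in turn, and the whole argument rests on the sub-claim that the continuations recursively supplied by \Cref{lem:path_ext} never pick up a fresh level-$a$ fragment before reaching the relevant level-$(a-1)$ vertex --- which is what makes them both crossing-free against later groups and pseudo-shortest. Pinning down that sub-claim, together with the accompanying combinatorial check that avoiding the junction vertex while not crossing either half of a concatenation really does imply not crossing the whole, is the part I expect to need the most care.
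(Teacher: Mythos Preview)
Your approach is essentially the same as the paper's: partition the Delaunay paths $\{\DP[u]:u\in W_a\}$ by their endpoint $w\in W_{a-1}$ and apply the Path Extension Lemma (\Cref{lem:path_ext}) once per group. The paper's write-up is very terse --- it simply says ``repeatedly apply \cref{lem:path_ext} to each $w\in W_{a-1}$'' and concludes that each new $\Pi[u]$ is crossing-free with $\Pi_{\le a-1}$ --- whereas you explicitly maintain the growing system $\Pi'$, identify and verify the sub-claim that every continuation $R_{u'}$ produced so far is already a suffix of some path in $\Pi_{\le a-1}$, and use it both to check the hypotheses of \cref{lem:path_ext} at each step and to establish pseudo-shortestness. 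Your version thus supplies the detail the paper leaves implicit (in particular, why extensions coming from \emph{different} groups $\mathcal{P}_{w}$ and $\mathcal{P}_{w'}$ are mutually non-crossing), and the ``junction vertex'' check you flag at the end is exactly the right place to be careful.
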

\begin{proof}
\begin{figure}
    \centering
    \includegraphics[page=1, width=0.3\textwidth]{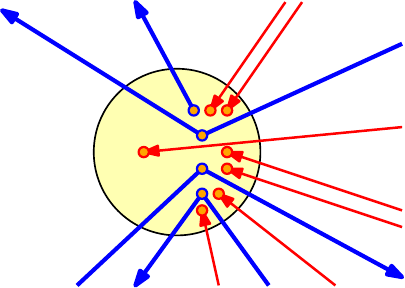}
    \includegraphics[page=2, width=0.3\textwidth]{figures/snapping_down.pdf}
    \includegraphics[page=3, width=0.3\textwidth]{figures/snapping_down.pdf}
    \caption{
    \textbf{(Left)} Step 1: The yellow disk represent the node $w$; blue paths represent paths in $\Pi$ that goes through $w$ ($\Pi^w$ in the case of Lemma~\ref{lem:path_ext_multiple}); red paths represent a non-crossing collection of paths that end at $w$ ($\Delta^w$ in the case of Lemma~\ref{lem:path_ext_multiple}).
    \textbf{(Middle)} Step 2: Snapping every red path to the vertex corresponding to the first blue path clockwise around the boundary.
    \textbf{(Right)} Step 3: Extending every red path by the continuation of the blue path marked in purple.
    }
    \label{fig:snapping_down}
\end{figure}
Fix a vertex $w\in W_{a-1}$ and consider the pseudo-shortest paths
\[ \Pi^{w} = \{\Pi[v] \in \Pi_{\le a-1} \mid \text{$\Pi[v]$ passes through $w$}\}. \]
Also consider the following Delaunay paths.
\[ \DP^w = \{\DP[u] \mid u\in W_a \text{ and the nearest neighbor of $u$ in $W_{a-1}$ is $w$}\}.\]
Note that no path of $\Pi^{w}$ crosses any path of $\DP^w$, and both are non-crossing path systems.

We repeatedly apply \cref{lem:path_ext} to each $w\in W_{a-1}$.
Doing so, we construct a path $\Pi[u]$ for each $u\in W_{a}$ that is crossing free with all paths of $\Pi_{\le a-1}$.
\end{proof}

By induction on distance from $s$ the lemma below follows.
\begin{lemma}[Spanning non-crossing path systems of psuedo-shortest paths]
\label{lem:delaunay-sncpspsp}
    Let $G$ be a unit disk graph on the point set $S$ and let $s$ be a fixed source vertex. 
    There exists a spanning non-crossing path system $\Pi$ of pseudo-shortest paths rooted at $s$ using only edges in the Delaunay triangulation of $S$.
\end{lemma}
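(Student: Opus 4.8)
The plan is to prove the statement by induction on the distance from $s$, assembling $\Pi$ one distance level at a time. For $a\ge 0$ write $W_{\le a}=\bigcup_{i\le a}W_i$, and maintain as inductive hypothesis a path system $\Pi_{\le a}$ defined on $W_{\le a}$ that is non-crossing, spans $W_{\le a}$, uses only Delaunay edges of $S$, and --- in a strengthened form that turns out to be convenient --- is \emph{suffix-closed}: for every vertex $x$ on a path $\Pi[v]\in\Pi_{\le a}$, the sub-path of $\Pi[v]$ from $x$ to $s$ is a pseudo-shortest path from $x$ to $s$ (in particular each $\Pi[v]$ is itself a pseudo-shortest path). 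The base case $a=0$ is trivial, with $W_{\le 0}=\{s\}$ and $\Pi[s]$ the one-vertex path. (We may assume $G$ is connected, so that every vertex eventually receives a path to $s$.)

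For the inductive step, suppose $\Pi_{\le a-1}$ is in hand. For each $u\in W_a$ let $w$ be the Euclidean nearest neighbor of $u$ in $W_{a-1}$; since $u$ has graph-distance $a$ from $s$ it has some neighbor in $W_{a-1}$, so $uw\in E$ and the Delaunay path $\DP[u]$ from $u$ to $w$ is well defined. By \Cref{lem:DD} and the remark following it, $\DP[u]$ is a pseudo-shortest path from $u$ to $w$ made of Delaunay edges whose interior vertices all lie in $W_a$ and are adjacent to both $u$ and $w$. By \Cref{lem:delaunaypath_samelv} and \Cref{lem:delaunaypath_difflv}, the family $\{\DP[u]:u\in W_a\}$ is internally non-crossing and crosses no path of $\Pi_{\le a-1}$: by the inductive construction every path of $\Pi_{\le a-1}$ is a concatenation of sub-paths of Delaunay paths $\DP[v']$ with $v'$ at distance $<a$ from $s$, two Delaunay edges never cross because $\DT(S)$ is plane, and \Cref{lem:delaunaypath_difflv} rules out forward and backward crossings between $\DP[u]$ and any such lower-level Delaunay path. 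Thus \Cref{lem:path_ext_multiple} applies and extends each $\DP[u]$ to a path $\Pi[u]$ from $u$ to $s$ creating no crossings; set $\Pi_{\le a}=\Pi_{\le a-1}\cup\{\Pi[u]:u\in W_a\}$. This is non-crossing, spans $W_{\le a}$ (every $u\in W_a$ now lies on $\Pi[u]$), and still uses only Delaunay edges, since the extension of $\DP[u]$ is a suffix of a path already present in $\Pi_{\le a-1}$.

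It remains to restore suffix-closedness (in particular the pseudo-shortest property) for the new paths, and this is the only delicate point. \Cref{lem:path_ext} attaches to $\DP[u]$ the $w$-to-$s$ suffix of \emph{some} path $\Pi[v]\in\Pi_{\le a-1}$ through $w$ --- not necessarily $\Pi[w]$ itself --- so I must argue the resulting concatenation is pseudo-shortest. By the strengthened hypothesis that suffix is a pseudo-shortest path from $w$ to $s$; since $uw\in E$ and $\mathrm{dist}(w,s)=\mathrm{dist}(u,s)-1$, some shortest $u$--$s$ path passes through $w$; hence by closure of pseudo-shortest paths under concatenation, $\Pi[u]=\DP[u]\cdot\Pi[v][w\to s]$ is a pseudo-shortest path from $u$ to $s$. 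Running the same argument from any interior vertex $x$ of $\DP[u]$ (which lies in $W_a$, is adjacent to $w$, and has $\mathrm{dist}(x,s)=a$) shows $\DP[u][x\to w]\cdot\Pi[v][w\to s]$ is pseudo-shortest from $x$ to $s$, so the suffix-closed invariant is re-established, completing the induction. I expect the main obstacle to be precisely this bookkeeping: the extension lemma snaps each Delaunay path onto an arbitrary nearby path of the previous system rather than a canonical one, which is exactly what forces the suffix-closed strengthening of the invariant and the repeated appeals to concatenation-closure; the geometry proper --- the structural properties of Delaunay paths, their pairwise non-crossing, and the snapping construction --- is already packaged in \Cref{lem:DD}, \Cref{lem:delaunaypath_samelv}, \Cref{lem:delaunaypath_difflv}, and \Cref{lem:path_ext_multiple}.
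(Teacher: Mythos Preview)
Your proposal is correct and follows essentially the same route as the paper: induction on the BFS level $a$, using \Cref{lem:delaunaypath_samelv}, \Cref{lem:delaunaypath_difflv}, and \Cref{lem:path_ext_multiple} to extend the Delaunay paths of level-$a$ vertices onto the already-built system $\Pi_{\le a-1}$. The paper's own proof is a single sentence invoking this induction; you have simply unpacked it.

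Your explicit \emph{suffix-closed} strengthening of the invariant is a genuine clarification the paper leaves implicit. The extension in \Cref{lem:path_ext} snaps $\DP[u]$ onto the $w$-to-$s$ suffix of \emph{some} path through $w$, not necessarily $\Pi[w]$, so one really does need to know that every such suffix is itself pseudo-shortest; your invariant supplies this, and your check that it is preserved (interior vertices of $\DP[u]$ lie in $W_a$ and are adjacent to $w$, so the concatenation from any such vertex is again pseudo-shortest) is correct. One small wording issue: you justify that $\DP[u]$ does not cross a path $P\in\Pi_{\le a-1}$ by applying \Cref{lem:delaunaypath_difflv} to the constituent lower-level Delaunay sub-paths of $P$, but non-crossing with each piece does not formally preclude a crossing at a junction. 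The clean argument---which is really the content of the proof of \Cref{lem:delaunaypath_difflv}---is that all vertices of $P$ lie in $W_{\le a-1}$ while all vertices of $\DP[u]$ except its endpoint lie in $W_a$, so the two paths share at most that endpoint and no forward or backward crossing is possible. This is an easy fix and does not affect the overall correctness.
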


\section{Shortest path separators in non-crossing spanning path systems}
\label{sec:pathseparator}

In this section we show how to find the shortest path separators on embedded 
triangulated planar graphs $G = (V,E)$
when given a spanning non-crossing path systems $\Pi$ to $s\in V$.
We aim to apply the lemma of Lipton and Tarjan albeit onto a triangulated graph $G'$ containing a slightly perturbed version of $\Pi$.
Formally we define a \emph{perturbation of a path system $\Pi$ rooted at $s$} as the collection of all paths with vertices except $s$ perturbed within a small ball of radius $\eps > 0$ such that none of the paths are crossing.
Observe that a single vertex $u \in V$ can be perturbed to many copies $u_1, ..., u_k$ corresponding to the same vertex $u\in V$.
See \Cref{fig:perturbed_pathsystem} for an example of this. 

The following lemma shows that this graph can be constructed while ensuring the additional edges we add are between vertices corresponding to edges in the original graph $G$.
\begin{lemma} \label{lem:perturb}
Given a triangulated embedded planar graph $G = (V,E)$, 
and a non-crossing spanning path system $\Pi$ to a vertex $s \in V$, 
there exists a planar graph $G' = (V', E')$ that is the triangulation of the perturbation of $\Pi$ such that every edge $e\in E'$ belongs to one of the following categories:
\begin{itemize}
    \item Edges $E_\path$ between vertices $u_i, v_j \in V'$ that correspond to $u, v \in V$ where $uv \in \Pi$.
    \item Edges $E_\vertex$ between $v_i, v_j \in V'$ that correspond to the same vertex $v \in V$.
    \item Edges $E_\triangulation$ between vertices $u_i, v_j \in V'$ that correspond to $u, v \in V$ where $uv \in E$.
\end{itemize}
\end{lemma}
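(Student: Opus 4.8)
The plan is to obtain $G'$ from a fixed perturbation $\widehat\Pi$ of $\Pi$ --- whose edges all lie in $E_\path$ by definition --- by inserting diagonals until every face is a triangle, and then to argue that every inserted diagonal lies in $E_\path\cup E_\vertex\cup E_\triangulation$. I will always choose the perturbation so that it preserves the combinatorial embedding of $G$: at a vertex $v$, the cyclic order in which path-edges leave the cluster of copies $v_1,\dots,v_k$ should be compatible with the cyclic order of edges around $v$ in $G$. Perturbations of this kind are exactly the ones produced by the constructions discussed in Chang, Erickson and Xu~\cite{ChangEX15}, and this property is what lets us transport information between $G$ and $\widehat\Pi$.

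I would triangulate $\widehat\Pi$ in two stages. In the first stage I add only $E_\path$ and $E_\vertex$ edges in order to ``re-close'' $\widehat\Pi$: wherever a face of $\Pi$ was bounded by a closed walk, the copies of that walk in $\widehat\Pi$ fail to close up only at split vertices, and inserting the appropriate same-vertex edges (and, along sub-paths shared by several paths, diagonals between copies of a single edge $uv\in\Pi$, which lie in $E_\path$) produces a plane graph $\widehat\Pi^{+}$ whose bounded faces are in natural bijection with the faces of $\Pi$. Since $\Pi$ is connected and spans all vertices, this is a routine plane-graph manipulation, and every edge added so far is legal.

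The second stage is the heart of the lemma, and it is where I use that $\Pi$ is \emph{spanning}. Let $F$ be a face of $\Pi$, regarded as a face of the connected spanning subgraph $\Pi\subseteq G$. Its open interior contains no vertex of $G$, because every vertex of $G$ lies on $\Pi$ and $\Pi$ is disjoint from that interior. As $G$ is triangulated and $\partial F\subseteq\Pi\subseteq G$, no face of $G$ can straddle $\partial F$; hence the faces of $G$ contained in $F$ tile $F$, each such face is a triangle with all three vertices on $\partial F$, and each edge of this tiling that is interior to $F$ is an edge of $G$ that is not an edge of $\Pi$. In other words, $G$ itself already prescribes a triangulation of $F$ whose boundary edges are $\Pi$-edges and whose diagonals are edges of $E\setminus\Pi$. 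I then lift this triangulation into the corresponding bounded face of $\widehat\Pi^{+}$: its boundary is a closed walk through copies of the $\Pi$-edges of $\partial F$ in the same cyclic order (this is where preservation of the embedding is used), and each diagonal $wx\in E\setminus\Pi$ of $F$ is incident to one particular occurrence of $w$ and one of $x$ on $\partial F$, so it lifts to a diagonal $w_ix_j$ with $wx\in E$, i.e.\ an edge of $E_\triangulation$. Carrying this out for every face yields a maximally triangulated $G'$ all of whose edges lie in $E_\path\cup E_\vertex\cup E_\triangulation$.

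The step I expect to be the main obstacle is exactly this lift. A vertex of $\Pi$ can occur several times on the boundary walk of a single face $F$ --- precisely when it is a cut vertex of $\Pi$, for instance an endpoint of a bridge --- so it has several copies on the boundary of the corresponding face of $\widehat\Pi^{+}$, and one must check that each $G$-diagonal of $F$ is attached to the correct pair of copies; equivalently, that the corners of $G$ at $v$ are in faithful correspondence with the corners of $\widehat\Pi^{+}$ at the copies of $v$. This faithful correspondence is precisely what the embedding-preserving choice of perturbation (together with the first stage) buys us, so once the perturbation is fixed the matching of corners is forced. The remaining points --- that $\widehat\Pi$ is a connected plane graph, that the first stage can always be completed within $E_\path\cup E_\vertex$, and that the final $G'$ is maximally triangulated --- are routine.
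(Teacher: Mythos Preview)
Your proposal is correct and follows essentially the same strategy as the paper: perturb $\Pi$, add same-vertex edges locally, and then fill in the remaining faces using the fact that $G$ already triangulates every face of the spanning subgraph $\Pi$. The only notable difference is organizational: you work face-by-face, lifting $G$'s induced triangulation of each face of $\Pi$ into the corresponding face of $\widehat\Pi^{+}$, whereas the paper works edge-by-edge (first realising one faithful copy of every $G$-edge, then observing that the residual faces are of two simple shapes that admit faithful triangulations). Your presentation is arguably cleaner and makes the role of the embedding-preserving perturbation more explicit, but the underlying idea and the key use of ``$\Pi$ is spanning, $G$ is triangulated'' are identical.
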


\begin{figure}
    \centering
    \includegraphics[scale=0.5, page=3]{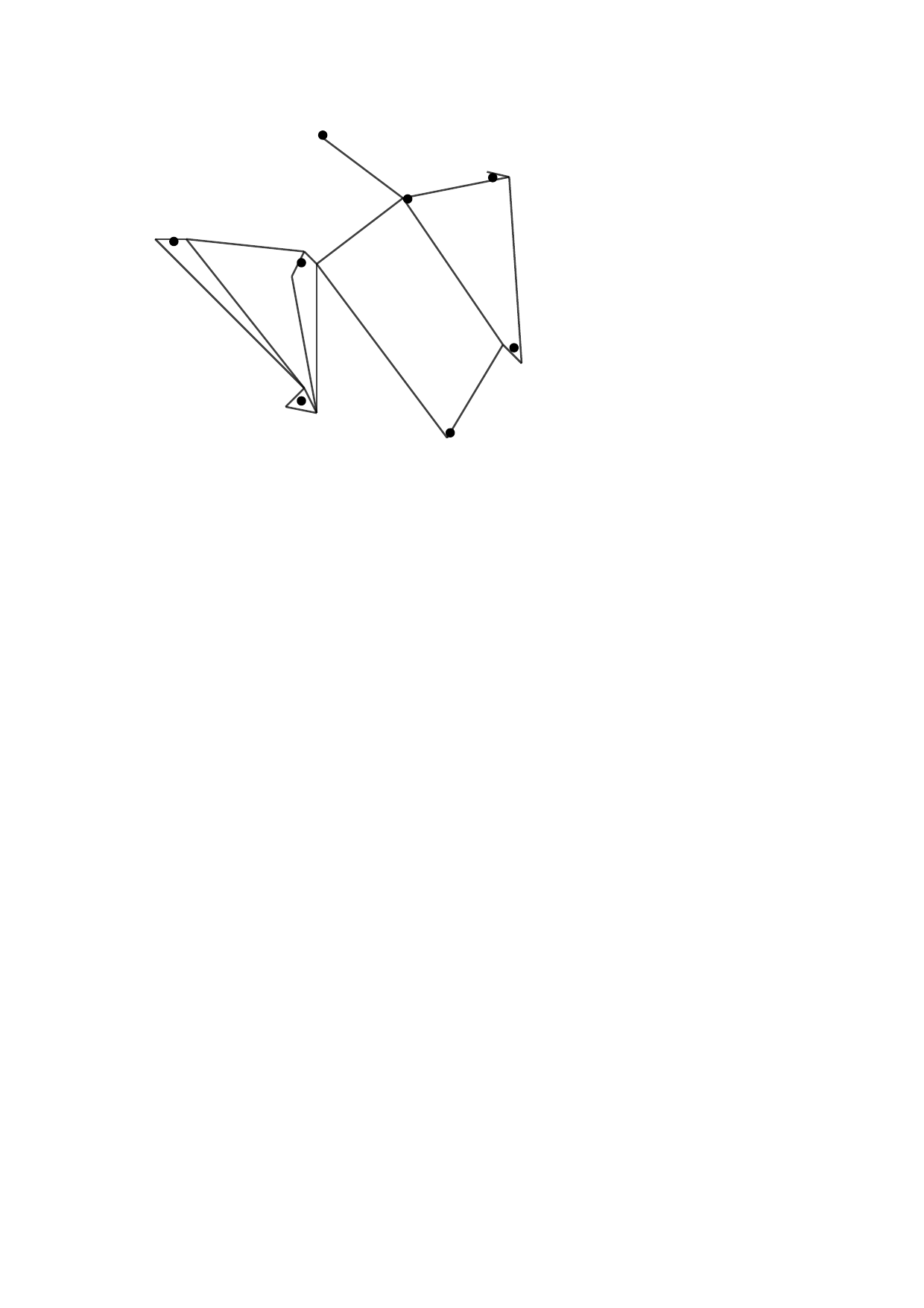}
    \includegraphics[scale=0.5, page=4]{figures/path_system_perturbation.pdf}
    \includegraphics[scale=0.5, page=5]{figures/path_system_perturbation.pdf}
    \caption{\textbf{(Left)} A path system $\Pi$ to a vertex $s$ in a Delaunay triangulation. 
    \textbf{(Middle)} A perturbation of all points on all paths at vertices other than $s$ by at most $\eps$ such that no two paths intersect except at $s$.
    \textbf{(Right)} A triangulation of the perturbation of the paths using only previously existing edges.
    }
    \label{fig:perturbed_pathsystem}
\end{figure}
\begin{proof}
We describe a construction of $G'$. Let $V'$ be the vertices of the perturbation of $\Pi$, consisting of all vertices $u\in V$ with $u\neq s$ that are perturbed to $Q_u = \{u_1, u_2, ..., u_{\ell(u)}\}$ where $\ell(u)$ is the number of paths using vertex $u$ in $\Pi$. 
We say an edge $u_i,v_j \in V'$ is \emph{faithful} if $u_i\in Q_u$ and $v_j\in Q_v$ and $uv \in G$.
Let $E_\path$ denote the edges that correspond to the perturbed paths of $\Pi$. 
Observe that the edges of $E_\path$ are faithful.

For all perturbations of the same vertex $u\in V$ to $Q_u = \{u_1, u_2, ..., u_{\ell(u)}\}\subset V'$,  add a maximal set of planar edges that do not cross the edges of $E_\path$ (i.e. a restricted triangulation in the local neighborhood of the perturbation), 
and let $E_\vertex^u$ denote these added edges. 
Define $E_\vertex = \bigcup_{u\in V} E_\vertex^u$. 

%

Now let's consider an edge $uv \in E$ with no path in $\Pi$ using $uv$. 
We claim that there exists a perturbed vertex $u_i \in Q_u$ and $v_j \in Q_v$ such that the edge $u_iv_j$ does not intersect edges of $E_\vertex \cup F$ where $F$ is any non-crossing collection of faithful edges. 
Thus it is possible to add a collection of non-crossing faithful edges $E_\triangulation^1$ that do not intersect $E_\vertex \cup E_\path$ so that every edge $uv\in E$ has a faithful edge.

Now consider the faces in the graph $(V', E_\path \cup E_\vertex \cup E_\triangulation^1)$, they are either: (1) a face that consists of three faithful edges corresponding to a triangle $uvw$ in $G$ joined by chains of vertices of $V'$ corresponding to vertices $u$, $v$, or $w$ (2) a face with two faithful edges corresponding to the same edge $uv\in E$.
Either case is easy to triangulate with an additional set of non-crossing faithful edges $E_\triangulation^2$ since any non-crossing triangulation can only consist of faithful edges. 
Let $E_\triangulation = E_\triangulation^1 \cup E_\triangulation^2$.
\end{proof}

With this lemma we present our result for any non-crossing spanning path system of a triangulated planar graph.

\begin{lemma}[Balanced separators for non-crossing path systems] \label{lem:separator}
Given a triangulated embedded planar graph $G = (V,E)$, and a non-crossing spanning path system $\Pi$ to a vertex $s \in V$, 
there exists a Jordan curve $C$ with the following properties:
\begin{enumerate}
    \item 
    There are at most $2n/3$ vertices in $V_\inside(C)$ and $V_\outside(C)$.
    \item $C$ is defined by two paths $P_u$ from $u$ to $s$ and $P_v$ from $v$ to $s$ and either one edge $uv \in E$ or $u=v$. 
    \item $P_u$ is the suffix of a path $\Pi[u']$ and $P_v$ is a suffix of a path $\Pi[v']$ for some $u', v' \in V$.
\end{enumerate}
\end{lemma}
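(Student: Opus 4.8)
The plan is to reduce \Cref{lem:separator} to the classical Lipton--Tarjan theorem (\Cref{thm:LTseparator}) applied to the perturbed-and-triangulated graph $G'$ supplied by \Cref{lem:perturb}, and then pull the resulting separator back to $G$. First I would fix the graph $G' = (V', E')$ from \Cref{lem:perturb}, whose edge set decomposes into $E_\path \cup E_\vertex \cup E_\triangulation$. The key structural observation is that $G'$ carries a natural spanning shortest-path-like tree $T'$: for each path $\Pi[u]$ of the (spanning) path system, its perturbed copy is a path in $G'$ ending at $s$ using only $E_\path$ edges, and since $\Pi$ is spanning these paths together cover every vertex of $V'$. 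Taking the union of all these perturbed paths and breaking cycles (keeping, for each $v_i \in V'$, the initial segment of whichever path created it) yields a spanning tree $T'$ of $G'$ in which the root-to-$v_i$ branch is exactly a suffix of (the perturbation of) some $\Pi[u']$. This tree need not be a shortest-path tree in $G'$ for any metric, but the Lipton--Tarjan argument only uses that $T'$ is a spanning tree of bounded... — actually, to be safe I would invoke the standard fact (used by Lipton--Tarjan and by Thorup) that their separator theorem holds for \emph{any} rooted spanning tree $T$ of a triangulated planar graph, producing a non-tree edge $xy$ whose fundamental cycle (the tree path $x \to \mathrm{lca} \to y$ plus $xy$) is balanced; the ``two shortest paths'' phrasing is just the special case where $T$ is a shortest-path tree. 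If one insists on the shortest-path version verbatim, one can assign $G'$ edge lengths making $T'$ a genuine shortest-path tree: give each $E_\path$ edge $u_iv_j$ length equal to the tree-depth difference, which is realizable since $T'$ is exactly the union of the $E_\path$-paths, and give all other edges large enough length — a routine construction.

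Given such a separator of $G'$, step two is translation back to $G$. The Lipton--Tarjan curve $C'$ in $G'$ is a Jordan curve composed of a tree path in $T'$ from $x$ to its lca with $y$, a tree path from $y$ to the lca, and the edge $xy$. Each tree path is, by construction of $T'$, a suffix of a perturbed copy of some $\Pi[u']$ (resp. $\Pi[v']$); collapsing the perturbation (sending every copy $v_i$ back to $v \in V$) turns these into suffixes $P_u$ of $\Pi[u']$ and $P_v$ of $\Pi[v']$ in $G$, meeting at $s$, and turns $xy$ either into an edge $uv \in E$ (it is a faithful edge, hence corresponds to an edge of $G$) or into a degenerate edge, i.e. $u = v$, if $x,y$ are copies of the same vertex. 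That gives properties (2) and (3). For property (1): perturbation only \emph{duplicates} vertices, never removes them, so $|V'| \ge n$, and every vertex $v \in V$ has all its copies lying in a ball of radius $\eps$; by choosing $\eps$ small we may assume no copy of $v$ lies on $C'$ unless $v$ is an endpoint $u$ or $v$ of the separating edge, and then all copies of a given non-endpoint vertex lie strictly inside or strictly outside $C'$ together. Assigning each $v \in V$ unit weight concentrated on (say) one designated copy $\widehat v \in V'$ and zero weight to the rest, the Lipton--Tarjan guarantee of weight $\le 2W/3 = 2n/3$ on each side of $C'$ immediately gives $\le 2n/3$ vertices of $V$ strictly inside and strictly outside $C = $ (image of $C'$ in $G$); the $O(1)$ vertices on $C$ itself are on the two paths and thus part of the separator, not counted.

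The main obstacle, and the step I would spend the most care on, is making precise the passage from ``Lipton--Tarjan as stated for shortest-path trees'' to ``balanced separator for our artificial tree $T'$'', together with the bookkeeping that the pulled-back objects $P_u, P_v$ are genuinely suffixes of path-system paths rather than arbitrary tree paths. Concretely: $T'$'s branches are suffixes of the $\Pi[u']$'s only if, when we break a cycle formed by two overlapping perturbed paths, we always delete an edge on the \emph{later-added} path near where it merges — so that the surviving root-branch of any copy $v_i$ is an unbroken suffix of a single $\Pi[u']$. This is exactly the combinatorics that \Cref{lem:path_ext} was set up to produce (paths snap onto existing paths and follow them to $s$), so the structure is already there; one just has to state it as: the $E_\path$-subgraph of $G'$ is itself a forest-plus-shared-suffixes, hence contracting nothing we can already take $T' = $ the $E_\path$ edges, which forms a tree rooted at $s$ spanning $V'$. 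Once that is nailed down, the rest is routine: apply \Cref{thm:LTseparator} (in its spanning-tree form) to $G'$ with tree $T'$ and the weights above, and read off $C$, $P_u$, $P_v$, $uv$.
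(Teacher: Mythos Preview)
Your approach is essentially the paper's: build $G'$ via \Cref{lem:perturb}, take the spanning tree $T'$ consisting of the $E_\path$ edges, put weight $1$ on one designated copy of each original vertex and $0$ on the rest, apply \Cref{thm:LTseparator}, and read off the fundamental cycle. The paper does exactly this in a few lines.

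Two places where you detour unnecessarily. First, there is no cycle-breaking to do: by the definition in \Cref{lem:perturb}, each occurrence of a vertex $u$ on a path of $\Pi$ gets its \emph{own} perturbed copy, so the perturbed paths are pairwise vertex-disjoint except at $s$, and their union is already a tree. This also immediately gives that every root-to-vertex branch of $T'$ is a suffix of a single perturbed $\Pi[u']$, which is the suffix property you were worried about. Second, you need not manufacture edge lengths to make $T'$ a shortest-path tree; as you note, the Lipton--Tarjan argument works for any spanning tree of a triangulated planar graph, and the paper simply invokes \Cref{thm:LTseparator} with $T'$ in that role.

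One small slip: your claim that ``no copy of $v$ lies on $C'$ unless $v$ is an endpoint $u$ or $v$ of the separating edge'' is false --- copies of every vertex appearing on $P_u$ or $P_v$ lie on $C'$. This is harmless for the balance argument, since the weighted Lipton--Tarjan bound concerns only the two open sides of $C'$, and the designated weight-$1$ copy of any such $v$ sits on the curve rather than inside or outside; but your stated justification for property~(1) should be rephrased accordingly.
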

\begin{proof}
Begin by constructing the triangulated perturbed graph $G'=(V', E')$ from $G$ and $\Pi$ with \Cref{lem:perturb}. 
Let $T$ be the tree defined by the perturbed paths of $\Pi$ in $G'$ rooted at $s$. 

Observe that $T$ is spanning and the fundamental cycles of $T$ contain $s$.
For vertices $u_1, ..., u_{\ell(u)} \in V'$ that correspond to the vertex $u\in V$, we  arbitrarily choose one vertex (say $u_1$) and give it weight $1$, and give all other vertices weight $0$. We do this for all vertices of $V$.
Now we can apply the weighted separator theorem of \Cref{thm:LTseparator} to $G'$ with this weight function, and spanning tree defined by the perturbed paths of $\Pi$
to get a balanced separator $C$ satisfying condition 1.
Observe that $C$ corresponds to a fundamental cycle of the perturbed paths,
which is an edge $u_iv_j\in E'$, and two paths that end at $s$ and correspond to paths of $\Pi$ satisfying condition 3.
Finally observe that the $u_i$ and $v_j$ correspond either to two vertices $u,v \in V$ with $uv\in E$ or to the same vertex $u\in V$. In either case condition 2 is satisfied.
\end{proof}

\section{Constructing the shortest path 1-neighborhood separator}
\label{sec:1hseparator}

\subsection{Delaunay edges are crominating}
Recall that a pair of vertices $u,v \in V$ is crominating if for all edges of $G$ intersecting the line segment between $u$ and $v$ has one end point adjacent to $u$ or $v$.
We will first show that Delaunay edges are crominating.
\begin{lemma} 
\label{lem:crominating}
For a unit disk graph $G=(V,E)$, all edges in the Delaunay triangulation $DT(G)$ are crominating.
\end{lemma}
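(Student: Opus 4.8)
The plan is to show that any Delaunay edge $uv \in \DT(G)$ is crominating, i.e., for every edge $xy \in E$ whose line segment crosses the segment $uv$, at least one of $x, y$ is adjacent (in $G$) to $u$ or to $v$. The key geometric fact to exploit is the empty-circle property of Delaunay triangulations: there is a circle $\Gamma$ through $u$ and $v$ whose interior contains no point of $S$. Actually, since $uv$ is a Delaunay edge, the Voronoi cells of $u$ and $v$ share an edge, and the locus of centers of empty circles through $u$ and $v$ is precisely (a segment of) this Voronoi edge; pick any such center $c$ and let $\Gamma$ be the circle of radius $|c-u| = |c-v|$ centered at $c$. The segment $uv$ is a chord of $\Gamma$, dividing the closed disk bounded by $\Gamma$ into two regions; call the one not containing $c$ the \emph{small side} (this is the region on the far side of $uv$ from $c$, bounded by $uv$ and the minor arc). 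The crucial claim is that \textbf{the small side has diameter at most $1$}: indeed, $uv$ is an edge of $G$ so $|u-v| \le 1$, and the minor arc subtends an angle at $c$ strictly less than $\pi$, so the widest the small side gets in the direction perpendicular to $uv$ is bounded; one checks that any two points in this region are within distance $|u - v| \le 1$ of each other. (If $c$ lies on the segment $uv$ itself, both arcs are semicircular and the argument is symmetric, with the "small side" being either half-disk, each of diameter $|u-v|\le 1$.)

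Given this, here is the main step. Let $xy \in E$ cross $uv$. By choosing the empty circle appropriately, I will argue that the crossing point $p = xy \cap uv$ lies in the closed disk of $\Gamma$, and in fact we can split into cases by which side of $uv$ the point $x$ (resp. $y$) falls on. If both $x$ and $y$ lie on the small side of $uv$: then since $|x-y|\le 1$ (as $xy \in E$) and actually any point of the small side is within distance $1$ of both $u$ and $v$ (again because the small side has diameter $\le 1$ and contains $u, v$ on its boundary), we get $x, y$ adjacent to both $u$ and $v$, and we are done. The remaining case is when at least one endpoint, say $y$, lies on the large side of $uv$ (the side containing $c$). Here I use \Cref{lem:crossinginterior}: I want to find a vertex $z$ whose unit disk the segment $xy$ (or a sub-segment) passes through, forcing adjacency. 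The right move is to use the emptiness of $\Gamma$: the segment $xy$ enters the disk of $\Gamma$ (it crosses the chord $uv$, and $x$ is on the small side which is inside the disk), so it passes near $u$ or $v$. More precisely, since the small side of $uv$ has diameter $\le 1$ and $x$ lies in it, and since $p \in uv$ with $|p - u| + |p - v| = |u-v|$, one of $|x - u|, |x - v|$ is at most... — here I'd invoke exactly the argument of \Cref{lem:crossinginterior} applied to the point $p$ on segment $xy$: $p$ lies in the unit disk of $u$ or of $v$ (since $|u-v|\le 1$ and $p$ is on $uv$), hence $p$ is within $1/2$ of $u$ or $v$; combined with $p$ on segment $xy$ of length $\le 1$, the endpoint on the small side is within distance $1$ of that same vertex among $\{u,v\}$. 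This yields the adjacency.

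I expect the main obstacle to be the case analysis when the crossing edge $xy$ "straddles" $uv$ with one endpoint on each side, or when $xy$ is long and pokes far out of the disk $\Gamma$ on the large side — there the simple diameter bound on the small side does not directly control where $y$ is, and I must be careful that the perturbation-free combinatorial crossing still forces the intersection point $p$ to lie inside $\Gamma$'s disk so that \Cref{lem:crossinginterior}-style reasoning applies. A secondary subtlety is choosing \emph{which} empty circle through $u,v$ to use: the Voronoi edge dual to $uv$ is a segment (or ray), and different choices of center $c$ give different circles; I should pick $c$ to be the point on that Voronoi edge closest to the line through $uv$, or possibly argue that \emph{some} choice of empty circle has the crossing point $p$ in its interior — this is where I'd spend the most care. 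Once the geometry of "the far side of a Delaunay edge from its empty-circle center is small" is pinned down cleanly, the adjacency conclusions should follow quickly from \Cref{lem:crossinginterior} and the triangle inequality.
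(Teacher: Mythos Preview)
Your proposal has a genuine gap: you assume throughout that $|u-v|\le 1$, explicitly writing ``$uv$ is an edge of $G$ so $|u-v|\le 1$.'' But $uv$ is only assumed to be a Delaunay edge of the point set $S$; it need not be an edge of the unit disk graph, and in general $|u-v|$ can be arbitrarily large. When $|u-v|\le 1$ the crominating property is nearly immediate: the crossing point $p\in uv$ lies within distance $1/2$ of $u$ or of $v$, so the segment $xy$ meets the unit disk of $u$ or $v$, and \Cref{lem:crossinginterior} (applied to the edge $xy$ and the vertex $u$ or $v$) finishes. The paper dispatches this case in two lines. All of your empty-circle machinery is therefore aimed at the easy case; moreover, your case analysis there is confused, since by the empty-circle property neither $x$ nor $y$ can lie in the interior of the disk bounded by $\Gamma$, so the case ``$x$ lies on the small side'' is vacuous.

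The substantive case is $|u-v|>1$, and here your ``small side has diameter $\le 1$'' claim fails outright, so nothing in your outline applies. The paper's argument for this case is quite different. Assuming for contradiction that $xy$ is not crominated (so $|x-u|,|x-v|,|y-u|,|y-v|>1$) and that the disk of radius $1/2$ at $x$ meets $uv$ with $x$ above $uv$, it shows $x$ must lie in the upper half $D^{\uparrow}_{uv}$ of the disk with \emph{diameter} $uv$, and then that $y\in D^{\downarrow}_x\subset D^{\downarrow}_{uv}$. Taking the Delaunay triangle $uvz$ with $z$ above $uv$, the circumcircle of $u,v,z$ necessarily contains $D^{\downarrow}_{uv}$ (if $z\in D^{\uparrow}_{uv}$) or $D^{\uparrow}_{uv}$ (if $z\notin D^{\uparrow}_{uv}$), hence contains $y$ or $x$ respectively, contradicting the Delaunay empty-circumcircle property. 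That use of the Delaunay triangle's circumcircle, rather than an empty circle through just $u$ and $v$, is the idea your proposal is missing.
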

\begin{proof}
For the sake of contradiction, suppose that there exists Delaunay edge $uv\in DT(G)$ and edge $xy\in E$, such that $uv$ crosses but does not dominate $xy$.
Without loss of generality, we may assume that the unit disk centered at $x$ intersects $uv$, $uv$ is horizontal, and $x$ lies above $uv$.
Consider if $|u-v| \le 1$ so that $uv \in E$. This would mean that $xy$ would need to intersect either the unit disk at $u$ or $v$. By Lemma~\ref{lem:crossinginterior}, this implies that one of $x$ or $y$ is adjacent to either $u$ or $v$, a contradiction. Thus we will focus on the case where $|u-v| > 1$.

Let $D^{\uparrow}_{uv}$ (resp. $D^\downarrow_{uv}$) be the semidisk above (resp. below) $uv$ with diameter $uv$.
Since $|x-u|,|x-v|>1$ and the distance between $x$ and line segment $uv$ is at most $1/2$, we have $x\in D^\uparrow_{uv}$.
Now, let $D_x$ be the disk with center $x$ and radius $1$, and let $D^\downarrow_x$ be the part of $D_x$ that lies below $uv$. Clearly $y\in D^\downarrow_x$. Since $|x-u|,|x-v|>1$, it follows that $D^\downarrow_x\subset D^\downarrow_{uv}$, and thus $y\in D^\downarrow_{uv}$. (See Figure~\ref{fig:delaunay-crominating-1}.)

Let $z\in S$ be the point above $uv$ such that $u,z,v$ share the same face in $DT(S)$. Let $D_{uvz}$ be the circumcircle through $u,v,z$. There are two cases to consider.
\begin{enumerate}
    \item If $z\in D^\uparrow_{uv}$, then $D_{uvz}\supset D^\downarrow_{uv}$, and thus must contain $y$. However, this is forbidden by the definition of Delaunay triangulation, a contradiction. 
    \item If $z\notin D^\uparrow_{uv}$, then $D_{uvz}\supset D^\uparrow_{uv}$, and thus must contain $x$, which is a contradiction. \qedhere
\end{enumerate}
\end{proof}

\begin{figure}
    \centering
    \includegraphics[scale = 0.9, page=1]{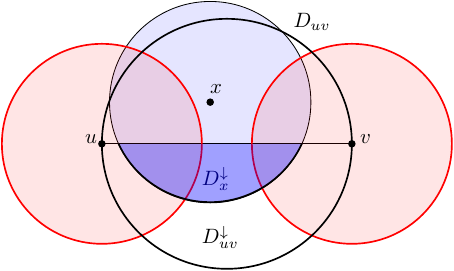}
    \caption{Suppose an edge $xy\in E$ intersects but is not dominated by a Delaunay edge $uv$. Without loss of generality assume that disk $x$ intersects edge $uv$ and lies above $uv$. We show that $x\in D^\uparrow_{uv}$ and $y\in D^\downarrow_x\subset D^\downarrow_{uv}$.}
    \label{fig:delaunay-crominating-1}
\end{figure}

\subsection{Putting everything together}

We finally have all the pieces we need to prove that unit disk graphs have a shortest path $1$-neighborhood separator. 

Let $G = (V, E)$ be a unit disk graph on $n$ points.
Fix an arbitrary source vertex $s \in V$.
By \Cref{lem:delaunay-sncpspsp}, there exists a spanning non-crossing path system $\Pi$ of pseudo-shortest paths rooted at $s$ using only the edges in the triangulation of $\DT(G)$\footnote{Note that $\DT(G)$ may not be triangulated as a planar graph, the outer face is typically not a triangle.}. 
Let $C$ be a Jordan curve according to Lemma~\ref{lem:separator}, and let $P_u,P_v$ be the paths defining $C$ as in Lemma~\ref{lem:separator}. Consider an edge $xy\in E$ that intersects $C$. There are a few cases:
\begin{enumerate}
    \item $xy$ intersects the Delaunay edge $uv$.
    \item $xy$ intersects an edge $wz$ of the triangulated outer face.
    \item $xy$ intersects an edge $wz$ in either $\Pi[u]$ or $\Pi[v]$.
\end{enumerate}
In the first case, $xy$ is crominated by $uv$, thus, removing the 1-neighborhood of $u$ and $v$ removes the edge $xy$.
In the second case, note that the edge $wz$ in the triangulated outer face is outside of the Delaunay triangulation and thus outside the convex hull of the point set, so it is impossible for an edge $xy\in E$ to cross.
In the third case, assume without loss of generality that $wz\in\Pi[u]$. 
The edge $wz$ is part of a Delaunay path between two vertices $w^*$ and $z^*$ which are two vertices on the shortest path from $u$ to $s$. 
By Lemma~\ref{lem:DD}, this entire edge is covered by the disk centered at $w^*$ and the disk centered at $z^*$. In particular that means that $xy$ intersects at least one of the disks centered at $w^*$ or $z^*$, and thus applying \Cref{lem:crossinginterior} shows that either $x$ or $y$ is a neighbor of $w^*$ or $z^*$.
We can now conclude that removing the $1$-neighborhood of $u,v,P_u$ and $P_v$ removes all edges that cross $C$.

\begin{theorem}
Every unit disk graph admits a shortest path $1$-neighborhood separator.
\end{theorem}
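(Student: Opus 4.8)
The plan is to assemble the three main ingredients developed above --- the spanning non-crossing path system of pseudo-shortest paths through Delaunay edges (\Cref{lem:delaunay-sncpspsp}), the Lipton--Tarjan-style balanced separator for non-crossing path systems of a triangulated planar graph (\Cref{lem:separator}), and the fact that Delaunay edges are crominating (\Cref{lem:crominating}) --- and to glue them into a separator for the (non-planar) unit disk graph.

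First I would fix an arbitrary root $s \in V$ and apply \Cref{lem:delaunay-sncpspsp} to $G = G(S)$, obtaining a spanning non-crossing path system $\Pi$ to $s$ in which every path is pseudo-shortest and uses only edges of the Delaunay triangulation $\DT(S)$; note these are genuine edges of $G$, since each is an edge of some Delaunay path between a pair of $G$-adjacent vertices and hence has length at most $1$ by \Cref{lem:DD}. Taking any triangulation of $\DT(S)$ as the ambient planar graph and feeding it, together with $\Pi$, into \Cref{lem:separator}, I would obtain a Jordan curve $C$ that is balanced --- at most $2n/3$ vertices strictly inside and at most $2n/3$ strictly outside --- consisting of a single edge $uv \in E$ (or the degenerate case $u = v$) together with two arcs $P_u$ and $P_v$, each a suffix of some pseudo-shortest path of $\Pi$ that ends at $s$. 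Since a suffix of a pseudo-shortest path is again a concatenation of Delaunay paths anchored at consecutive vertices of a genuine shortest path to $s$, $C$ consists, up to taking $1$-neighbors of its path portions, of a constant number of shortest paths plus one edge.

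The substantive step is to certify that $C$ actually separates $G$ once we delete the $1$-neighborhood of $u$, of $v$, and of every vertex of $P_u \cup P_v$: because $G$ is embedded by disk centers with many crossing edges, a plane Jordan curve need not cut the graph on its own. I would take an arbitrary edge $xy \in E$ whose segment crosses $C$ and case on which arc of $C$ it meets. If it crosses the Delaunay edge $uv$, then since $uv$ is crominating (\Cref{lem:crominating}) either $x$ or $y$ is adjacent to $u$ or $v$, so deleting the $1$-neighborhoods of $u$ and $v$ kills $xy$. If it crosses an edge of the triangulated outer face of $\DT(S)$, this cannot happen, since that edge lies outside the convex hull of $S$ whereas $xy$ is a chord of the hull. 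Otherwise $xy$ crosses an edge $wz$ of $P_u$ or $P_v$, say of $P_u$; here $wz \in E$ (of length at most $1$ by \Cref{lem:DD}), and any point of the segment $wz$ is within distance $1/2$ of $w$ or of $z$, so the crossing point witnesses that the segment $xy$ meets the unit disk of $w$ or of $z$, whence \Cref{lem:crossinginterior} forces $x$ or $y$ to be adjacent to $w$ or $z$ --- both vertices of $P_u$. In all cases some endpoint of $xy$ lies in the $1$-neighborhood of $\{u,v\} \cup P_u \cup P_v$, so after deleting that neighborhood no edge of $G$ crosses $C$; the vertices strictly inside $C$ are then disconnected from those strictly outside, and by \Cref{lem:separator} each side contains at most $2n/3$ vertices. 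Together with the earlier observation that $P_u, P_v$ are pseudo-shortest, this is the desired shortest path $1$-neighborhood separator.

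I expect the hard part to be precisely this last bridging step --- reconciling a planar separating curve with a non-planar intersection graph --- and it is what justifies the whole detour through Delaunay triangulations: the crominating property (\Cref{lem:crominating}) is what lets a Delaunay curve ``dominate'' every crossing edge, and the structural guarantees of \Cref{lem:DD} on Delaunay paths are what simultaneously keep the curve pseudo-shortest and tame the edges crossing its path portions. Everything else --- choosing the root, invoking \Cref{lem:delaunay-sncpspsp} and \Cref{lem:separator} --- is routine assembly.
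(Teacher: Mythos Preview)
Your assembly is almost identical to the paper's, but the argument slips at the very point you flag as ``the hard part.'' In case~3 you show that an edge $xy\in E$ crossing a path-edge $wz$ of $P_u$ has an endpoint adjacent to $w$ or to $z$. That is correct, but $w$ and $z$ are only vertices of the \emph{pseudo}-shortest path $P_u$; they may themselves be $1$-neighbours of the underlying shortest path rather than lying on it. Consequently what you have established is that removing the $1$-neighbourhood of $P_u\cup P_v\cup\{u,v\}$ disconnects $G$, which is only a shortest path \emph{$2$-neighbourhood} separator. Your closing sentence --- ``together with the earlier observation that $P_u,P_v$ are pseudo-shortest, this is the desired shortest path $1$-neighborhood separator'' --- is exactly where the leap occurs, and it does not follow.

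The paper closes this gap by exploiting a stronger consequence of \Cref{lem:DD} than the one you invoke. The edge $wz$ belongs to some Delaunay path whose endpoints $w^*,z^*$ are consecutive vertices of the actual shortest path from $u$ to $s$; by \Cref{lem:DD}(2) both $w$ and $z$ lie in the disk with diameter $w^*z^*$, so the entire segment $wz$ is covered by the unit disks at $w^*$ and $z^*$. Hence the crossing point of $xy$ with $wz$ lies in one of those disks, and \Cref{lem:crossinginterior} then forces $x$ or $y$ to be adjacent to $w^*$ or $z^*$ --- genuine shortest-path vertices. In other words, the ``structural guarantees of \Cref{lem:DD}'' you mention are precisely what let one jump from the pseudo-shortest curve back to the shortest path itself; you named the tool but did not deploy it, and used only the length bound $|wz|\le 1$ from \Cref{lem:DD}(1), which is too weak to reach the $1$-neighbourhood conclusion.
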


\subparagraph*{Remarks} We have omitted discussions of runtimes related to the construction of our $1$-neighborhood separator, but we note that we can construct the separator in $O(n^2)$ time.
Finding the path system takes $O(n^2)$ time. Indeed, the path system has $O(n^2)$ size, although more compact representations of the path system are possible. Furthermore, it can be shown that performing the perturbation can be done in $O(n^2\log n)$ time by the algorithm of \cite{ChangEX15}, and the separator construction of \Cref{lem:separator} can be done in $O(n^2)$. 
We leave as an open problem as to whether it is possible to construct the separator faster.
To contrast, finding shortest path separators in planar graphs can be done in $O(n)$ time using the dual co-tree.

\bibliographystyle{plainurl}
\bibliography{reference}

\end{document}